\pdfoutput=1 
\RequirePackage{amsmath}
\documentclass[envcountsame,envcountsect,runningheads]{llncs}

\newif\iflong
\longtrue

\newif\ifcomments   
\commentsfalse

\newcommand{\paragraphNoSkip}[1]{\par\smallskip\noindent\textbf{#1}.}

\usepackage[T1]{fontenc}    
\usepackage[english]{babel} 
\usepackage{csquotes}       
\usepackage{dsfont}         
\usepackage{mathtools}      
\usepackage{enumitem}       
\usepackage{graphicx}       
\graphicspath{ {./images/} }
\usepackage{amsfonts}       
\usepackage{hyperref} 
\usepackage{xcolor}         
\hypersetup{                
    colorlinks,
    linkcolor={red!50!black},
    citecolor={blue!50!black},
    urlcolor={blue!80!black}
}
\usepackage{etoolbox}\appto\UrlBreaks{\do\-}
\usepackage[capitalise]{cleveref}       

\ifcomments\setlength {\marginparwidth }{2cm}\fi  
\usepackage[colorinlistoftodos,prependcaption,textsize=tiny,textwidth=\marginparwidth]{todonotes}

\usepackage{thm-restate}   
\spnewtheorem{junk}{Junk}{\bfseries}{\itshape} 
\spnewtheorem{myclaim}[theorem]{Claim}{\bfseries}{\itshape}
\spnewtheorem{myremark}[theorem]{Remark}{\bfseries}{\itshape}
\spnewtheorem{fact}[theorem]{Fact}{\bfseries}{\itshape}
\spnewtheorem{conclusion}[theorem]{Conclusion}{\bfseries}{\itshape}

\usepackage{etoolbox} 
\AtEndEnvironment{proof}{\phantom{}\qed} 

\crefname{definition}{Def.}{Defs.}
\crefname{theorem}{Theorem}{Theorems}
\crefname{myclaim}{Claim}{Claims}
\crefname{proposition}{Prop.}{Props.}
\crefname{fact}{Fact}{Fact}
\crefname{lemma}{Lemma}{Lemmas}
\crefname{corollary}{Corollary}{Corollaries}
\crefname{myremark}{Remark}{Remarks}

\DeclareMathOperator*{\argmax}{arg\,\max}
\newcommand{\define}{\stackrel{\mathclap{\mbox{\text{\tiny def}}}}{=}}

\usepackage[symbols,acronym,nonumberlist,nogroupskip,section=subsection,numberedsection,stylemods={mcols,longbooktabs}]{glossaries-extra}
\iflong\makenoidxglossaries\fi

\glssetcategoryattribute{acronym}{nohyper}{true} 
\setabbreviationstyle[acronym]{long-short}  
\newacronym{DEX}{DEX}{decentralized exchange}
\newacronym{EVM}{EVM}{Ethereum virtual machine}
\newacronym{DeFi}{DeFi}{decentralized finance}
\newacronym{PoW}{PoW}{Proof-of-Work}
\newacronym{PoS}{PoS}{Proof-of-Stake}
\newacronym{MEV}{MEV}{miner-extractable value}
\newacronym{block-DAG}{block-DAGs}{block directed-acyclic-graph}
\newacronym{DAA}{DAA}{difficulty-adjustment algorithm}
\newacronym{MDP}{MDP}{Markov decision-process}
\newacronym{DQL}{DQL}{Deep-Q-learning}
\newacronym{RL}{RL}{reinforcement learning}
\newacronym{ML}{ML}{machine learning}
\newacronym{AI}{AIs}{artificial intelligence}
\newacronym{PDF}{PDF}{probability density function}
\newacronym{CDF}{CDF}{cumulative density function}
\newacronym{AMM}{AMM}{automated market maker}
\newacronym{USD}{USD}{United States Dollar}
\newacronym{IP}{IP}{Internet Protocol}
\newacronym{LP}{LP}{Liquidity Provider}
\newacronym{LT}{LT}{Liquidity Taker}
\newacronym{APY}{APY}{annual percentage yield}
\newacronym{PID}{PID}{proportional integral derivative}
\newacronym{UTXO}{UTXO}{unspent transaction output}
\newacronym{YAML}{YAML}{YAML Ain't Markup Language}
\newacronym{TD}{TD}{total difficulty}
\newacronym{geth}{geth}{Go Ethereum}
\newacronym{WETH}{WETH}{Wrapped Ethereum}
\newacronym{ASIC}{ASIC}{Application Specific Integrated Circuit}
\newacronym{RPC}{RPC}{remote procedure call}
\newacronym{RUM}{RUM}{riskless uncle maker}
\newacronym{PUM}{PUM}{preemptive uncle maker}
\newacronym{URL}{URL}{uniform resource locator}
\newacronym{SSD}{SSD}{solid state drive}
\newacronym{EIP}{EIP}{Ethereum improvement proposal}
\newacronym{CPU}{CPU}{central processing unit}
\newacronym{RAM}{RAM}{random-access memory}
\newacronym{mempool}{mempool}{memory pool}
\newacronym{TFM}{TFM}{transaction fee mechanism}
\newacronym{iid}{i.i.d.}{independent and identically distributed}
\newacronym{iff}{iff}{if and only if}
\newacronym{wrt}{w.r.t.}{with regard to}
\newacronym{wlog}{w.l.o.g.}{without loss of generality}
\newacronym{TTL}{TTL}{time to live}
\newacronym{DSIC}{DSIC}{dominant strategy incentive-compatible}
\newacronym{EPIC}{EPIC}{ex-post incentive-compatible}
\newacronym{BIC}{BIC}{Bayesian incentive-compatible}
\newacronym{IC}{IC}{incentive-compatible}
\newacronym{MMIC}{MMIC}{myopic miner incentive-compatible}
\newacronym{FMIC}{FMIC}{farsighted miner incentive-compatible}
\newacronym{OCA}{OCA}{off-chain-agreement}
\newacronym{SCP}{SCP}{side-contract-proof}
\newacronym{PABGA}{PABGA}{pay-as-bid greedy auction}
\newacronym{SPA}{SPA}{second price auction}
\newacronym{UPGA}{UPGA}{uniform-price greedy auction}
\newacronym{BNE}{BNE}{Bayesian-Nash equilibrium}
\newacronym{EPIR}{EPIR}{ex-post individually rational}
\newacronym{EPBB}{EPBB}{ex-post burn balanced}
\newacronym{GTA}{GTA}{good toy auction}
\newacronym{GTFBA}{GTFBA}{good toy finite-blocksize auction}
\newacronym{QoS}{QoS}{quality of service}
\newacronym{MPC}{MPC}{multi-party computation}

\glsxtrnewsymbol[description={
    An auction.
}]{auction}{
    \ensuremath{A}
}
\newcommand{\auction}{{\gls[hyper=false]{auction}}}

\glsxtrnewsymbol[description={
    A transaction.
}]{tx}{
    \ensuremath{\tau}
}

\glsxtrnewsymbol[description={
    Payment rule.
}]{pay}{
    \ensuremath{p}
}
\newcommand{\pay}{{\gls[hyper=false]{pay}}}

\glsxtrnewsymbol[description={
    Burning rule.
}]{burn}{
    \ensuremath{\beta}
}
\newcommand{\burn}{{\gls[hyper=false]{burn}}}

\glsxtrnewsymbol[description={
    Reserve price of an auction.
}]{reserve}{
    \ensuremath{r}
}
\newcommand{\reserve}{{\gls[hyper=false]{reserve}}}

\glsxtrnewsymbol[description={
    Allocation function.
}]{allocation}{
    \ensuremath{x}
}
\newcommand{\alloc}{{\gls[hyper=false]{allocation}}}

\glsxtrnewsymbol[description={
    Transaction fee of some transaction, in tokens.
}]{fee}{
    \ensuremath{b}
}
\newcommand{\fee}{{\gls[hyper=false]{fee}}}

\glsxtrnewsymbol[description={
    Predefined maximal amount of transactions a block can contain.
}]{blocksize}{
    \ensuremath{\mathcal{B}}
}
\newcommand{\blocksize}{{\gls[hyper=false]{blocksize}}}

\glsxtrnewsymbol[description={
    Number of all bids with $b_i \geq \reserve$.
}]{bidsMoreReserve}{
    \ensuremath{S_{\geq \reserve}}
}
\newcommand{\bidsMoreReserve}{{\gls[hyper=false]{bidsMoreReserve}}}

\NewDocumentCommand{\expect}{ e{_} s o >{\SplitArgument{1}{|}}m }{%
  \operatorname{E}
  \IfValueT{#1}{{\!}_{#1}}
  \IfBooleanTF{#2}{
    \expectarg*{\expectvar#4}%
  }{
    \IfNoValueTF{#3}{
      \expectarg{\expectvar#4}%
    }{
      \expectarg[#3]{\expectvar#4}%
    }%
  }%
}
\NewDocumentCommand{\expectvar}{mm}{%
  #1\IfValueT{#2}{\nonscript\;\delimsize\vert\nonscript\;#2}%
}
\DeclarePairedDelimiterX{\expectarg}[1]{[}{]}{#1}
\title{Discrete \& Bayesian Transaction Fee Mechanisms}
\author{%
    Yotam Gafni\inst{1}%
    \orcidID{0000-0002-2144-655X}%
    \and%
    Aviv Yaish\inst{2}%
    \orcidID{0000-0002-7971-2494}%
}
\institute{%
    Weizmann Institute of Science, Rehovot, Israel%
    \iflong,\\\email{yotam.gafni@gmail.com}%
    \\\url{https://www.yotamgafni.com}\fi%
    \and
    The Hebrew University, Jerusalem, Israel%
    \iflong,\\\email{aviv.yaish@mail.huji.ac.il}%
    \\\url{https://www.avivyaish.com}\fi%
}
\begin{document}
\maketitle
\begin{abstract}
    Cryptocurrencies employ auction-esque \emph{\glspl{TFM}} to allocate transactions to blocks, and to determine how much fees miners can collect from transactions.
    Several impossibility results show that \glspl{TFM} that satisfy a standard set of ``good'' properties obtain low revenue, and in certain cases, no revenue at all.
    In this work, we circumvent previous impossibilities by showing that when desired \gls{TFM} properties are reasonably relaxed, simple mechanisms can obtain strictly positive revenue.
    By discretizing fees, we design a \gls{TFM} that satisfies the extended \gls{TFM} desiderata: it is \gls{DSIC}, \gls{MMIC}, \gls{SCP} and obtains asymptotically optimal revenue (i.e., linear in the number of allocated bids), and optimal revenue when considering separable \glspl{TFM}.
    If instead of discretizing fees we relax the \gls{DSIC} and \gls{SCP} properties, we show that Bitcoin's \gls{TFM}, after applying the revelation principle, is \gls{BIC}, \gls{MMIC}, \gls{OCA} proof, and approximately revenue-optimal.
    We reach our results by characterizing the class of multi-item \gls{OCA}-proof mechanisms, which may be of independent interest.%
    \keywords{Auctions \and Mechanism Design \and Incentives}
\end{abstract}

\iflong\let\thefootnote\relax\footnotetext{This work subsumes the discrete and Bayesian myopic analysis of ``\href{https://arxiv.org/pdf/2210.07793v3}{Greedy Transaction Fee Mechanisms for (Non-)myopic Miners}''. That work's remaining results are now in other papers, namely the strictness of \gls{SCP} relative to \gls{OCA}-proofness (see \cite{gafni2024barriers} for an analysis of collusion in \glspl{TFM}), and the non-myopic analysis of \glspl{TFM} (see \cite{gafni2024competitive} for a study of this setting).}\fi

\section{Introduction}
When blockspace is in high demand, users can attach a fee to their transactions, thus prioritizing their transactions' allocation to a block.
The fees that miners can collect from allocated transactions are determined by the blockchain's \glsxtrfull{TFM}.
A challenge faced by \glspl{TFM} is the ability of miners to manipulate the mechanism for their profit at the expense of users, and the ability of users and miners to increase their joint utility by colluding.

Given this challenge, Roughgarden \cite{roughgarden2021transaction} poses an open question: \emph{can we design ``good'' \glspl{TFM} that are resistant to manipulation?}
Previous work cast doubt on the prospects of designing such \glspl{TFM} via a series of impossibility results, showing that their revenue is 0 in certain cases \cite{chung2023foundations,chung2024collusionresilience,gafni2024barriers}, and that even \glspl{TFM} which rely on strong cryptographic primitives suffer from low revenue \cite{shi2023what}.

\subsection{Our Approach}
We show that previous impossibility results do not close the door on good \glspl{TFM}, as they rely on implicit assumptions that we relax, allowing us to prove that \glspl{TFM} which do not rely on sophisticated cryptographic tools can obtain strictly positive revenue.
In total, our work presents a positive outlook for \glspl{TFM}.

\paragraphNoSkip{\gls{TFM} desiderata}
In \cref{sec:Model}, we present the literature's canonical list of desirable \gls{TFM} properties.
Briefly, a good \gls{TFM} should be:
(1) \glsxtrfull{DSIC} for users (\cref{def:Dsic}), meaning that users' fees correspond to their truthful valuation for having their transactions allocated;
(2) \glsxtrfull{MMIC} for miners (\cref{def:Mmic}), meaning that profit-maximizing myopic miners cannot increase profits by omitting user transactions from blocks or allocating ``fake'' miner-created transactions;
and (3) resistant to miner-user collusion, where several works consider a notion called \glsxtrfull{OCA} proofness (\cref{def:Oca}) \cite{roughgarden2021transaction,chung2024collusionresilience,gafni2024barriers}, while others consider a notion named \glsxtrfull{SCP} (\cref{def:Scp}) \cite{chung2023foundations,shi2023what}.

\paragraphNoSkip{Characterizing \gls{OCA}-proofness}
We begin with our main working tool, \cref{thm:OcaChar}, which extends the characterization of single-item \gls{OCA}-proof \glspl{TFM} by Gafni and Yaish \cite{gafni2024barriers} to multi-item mechanisms.
Importantly, we find that if such \glspl{TFM} rely on burning fees (i.e., taking them out of circulation instead of paying them to the miner), then the amount of burnt fees depends only on the number of allocated transactions.
Later, we use this result to reason about the revenue of various \gls{TFM} designs in the discrete and Bayesian settings.

\paragraphNoSkip{Possibility \& impossibility in the discrete case}
We construct a \gls{TFM} in \cref{sec:Discrete} that ticks all the boxes: it is \gls{DSIC}, \gls{MMIC} and \gls{OCA}-proof, thus giving a positive answer to the open question of Roughgarden \cite{roughgarden2021transaction}.
Furthermore, we show in \cref{thm:Gta} that our \gls{TFM} satisfies \gls{SCP} and obtains positive revenue, where the impossibility result of \cite{chung2023foundations} is circumvented by \emph{discretizing} fees, that is, users can only specify fees that correspond to integers.
We follow with \cref{thm:ScpImpliesLowRev}, an impossibility result showing that the revenue of \gls{DSIC} and \gls{SCP} discrete-fee \glspl{TFM} is linear in the number of transactions included in a block, implying the optimality of our \gls{TFM} up to a constant.
We follow with \cref{thm:SeparableScpImpliesLowRev}, a tight upper-bound for the subclass of separable \glspl{TFM} (\cref{def:SeparableRules}).

\paragraphNoSkip{Possibility in the continuous case}
In the traditional setting where fees are continuous, we show in \cref{sec:PABGA} that one can design good \glspl{TFM} by relaxing the \gls{DSIC} and \gls{SCP} requirements, and prove that Bitcoin's \gls{TFM}, also known as the \glsxtrfull{PABGA}, is \glsxtrfull{BIC}, \gls{MMIC}, \gls{OCA}-proof, and obtains good revenue with respect to three benchmarks:
\begin{enumerate}[leftmargin=*,noitemsep]
    \item In \cref{thm:BKPABGA}, we prove a Bulow-Klemperer \cite{bulow1996auctions} type result.
    We show that with any \gls{iid} regular distribution, the \gls{PABGA} with $n$ users and a block size of $\blocksize$ transactions has $\frac{n-\blocksize}{n}$ the revenue of the optimal \gls{TFM} with $\blocksize$ identical items.
    \item We examine special distributions of interest in \cref{clm:UniformDistB,clm:ExponentialDist}, and show that the \gls{PABGA}'s revenue is a significant constant fraction of the total value surplus for uniform users with size-limited blocks, and for exponential users when there are many more transactions than a single block can fit.
    \item Finally, we analyze a large and natural class of \glspl{TFM} we call ``well-reserved'', that behave properly with respect to their reserve prices (that specify the fee below which transactions cannot be allocated, and which may be $0$).
    This class includes Ethereum's \gls{TFM} \glsxtrshort{EIP}1559 \iflong\cite{buterin2019eip}\fi and Myerson's auction \cite{myerson1981optimal}.
    We show in \cref{thm:RevenueOptimalClass} that within this class, the \gls{PABGA} is the revenue optimal \gls{BIC}, \gls{MMIC} and \gls{OCA}-proof \gls{TFM}.
    In \cref{ex:NonConstantDelta}, we highlight designs that are not well-reserved, and outperform the \gls{PABGA} by restricting supply.
\end{enumerate}

All missing proofs are provided in \cref{sec:Proofs}.

\subsection{Related Work}
\label{sec:RelatedWork}

The study of \glspl{TFM} was initiated by Lavi \emph{et al.} \cite{lavi2019redesigning}, who propose the ``monopolistic auction'' of Goldberg \& Hartline \cite{goldberg2001competitive} where all included transactions pay the same fee as an alternative to Bitcoin's \gls{TFM}, and show it can better maintain miner revenue (and thus, Bitcoin's security) when considering large block sizes and low block rewards.
The authors' conjecture that their mechanism is \gls{IC} when the number of bidders approaches infinity was resolved by Yao \cite{yao2018incentive}.
Basu \emph{et al.} \cite{basu2023stablefees} combine the mechanism proposed by \cite{lavi2019redesigning} with the observation of Eyal \emph{et al.} \cite{eyal2016bitcoin}, that splitting fee revenue between miners of consecutive blocks can prevent miner misbehavior, and show that the resulting \gls{TFM} becomes more robust to manipulations when the number of users and miners increases.
Roughgarden \cite{roughgarden2021transaction} crystallizes the observations of Lavi \emph{et al.} \cite{lavi2019redesigning} by formally defining \gls{DSIC}, \gls{MMIC} and \gls{OCA}, and moreover showing that \glsxtrshort{EIP}1559 mostly satisfies these properties.
Chung and Shi \cite{chung2023foundations} suggest a collusion notion called \gls{SCP}, and show that \gls{DSIC} and \gls{SCP} mechanisms cannot obtain more than $0$ revenue, even if blocks are infinite in size.
We circumvent their result in \cref{sec:Discrete} by relying on discrete bids.
Gafni and Yaish \cite{gafni2024barriers}, and Chung \emph{et al.} \cite{chung2024collusionresilience} show that \glspl{TFM} that satisfy the \gls{DSIC}, \gls{MMIC} and \gls{OCA}-proofness properties have 0 revenue.
In \cref{sec:PABGA}, we show that \gls{MMIC} and \gls{OCA}-proof \glspl{TFM} can obtain better revenue and satisfy \gls{BIC}, a relaxation of the \gls{DSIC} property.

\section{Model}
\label{sec:Model}
We now define our model\iflong, with a summary of all notations given in \cref{sec:Glossary}\fi.
Following prior art \cite{lavi2019redesigning,yao2018incentive,basu2023stablefees,roughgarden2021transaction,chung2023foundations,shi2023what,chung2024collusionresilience,gafni2024barriers}, we adopt the perspective of auction theory: miners are auctioneers, users are bidders, transaction fees are bids, all transactions are of equal size, miners and users are rational and myopic (i.e., their utility functions only account for the upcoming block), and each user values having a single transaction allocated (irrespective of considerations like transaction ordering, as discussed in \cref{sec:Discussion}).
We use the above terms interchangeably, e.g., ``auction'' is used to refer to a \gls{TFM}.

\subsection{Transaction fee mechanism}
A \gls{TFM} $\auction$ is described by its \emph{intended allocation rule} $\alloc^\auction$, \emph{payment rule} $\pay^\auction$ and \emph{burning rule} $\burn^\auction$.
To formally define these rules, we follow the prevalent nomenclature of the auction theory literature: given some vector $v$, we denote the vector's $i$-th element by $v_i$, and we use $v_{-i}$ to refer to the vector where the $i$-th index is removed.
\Gls{wlog}, we assume the number of pending transactions waiting for allocation is $n$.

\paragraphNoSkip{Allocation rule}
The miner elected to create the upcoming block has complete control over the contents of the block as long as the number of transactions is lower than the size limit $\blocksize \in \mathbb{N} \cup \left\{ \infty \right\}$.
Thus, we make the distinction between a mechanism's \emph{intended} allocation rule $\alloc^\auction$ as defined by the mechanism's creators, and the allocation rule chosen by the miner, which we denote by $\alloc$.
\begin{definition}[Intended allocation rule]
    \label{def:AllocFunc}
    The intended allocation rule $\alloc^\auction: \mathbb{R}_+^n \rightarrow \left\{0,1\right\}^n$ defines which bids should be allocated by auction $\auction$.
    Given a bid vector $\mathbf{\fee}$, the $i$-th bid $\mathbf{\fee}_i$ is allocated under the intended allocation rule if and only if $\alloc^\auction_i(\fee_i,\mathbf{\fee_{-i}}) = 1$.
    Moreover, we have $\sum_{i=1}^n\alloc^\auction_i(\fee_i,\mathbf{\fee_{-i}}) \le \blocksize$.
\end{definition}
Allocation rules may prohibit the allocation of bids below some \emph{reserve price}.
\begin{definition}[Reserve price]
    \label{def:ReservePrice}
    An auction $\auction$ with intended allocation rule $\mathbf{\alloc^\auction}$ has a \emph{reserve price} $\reserve$ if bids lower than $\reserve$ are never allocated under $\mathbf{\alloc^\auction}$, i.e., for any bid vector $\mathbf{\fee}$ and for any bidder $i$:
    $\fee_{i} < \reserve \Rightarrow \alloc^\auction(\fee_i,\mathbf{\fee_{-i}}) = 0$.
\end{definition}

\paragraphNoSkip{Payment rule}
The payment rule defines the fees that each transaction pays (see \cref{def:PaymentRule}).
Specifically, we consider \gls{EPIR} rules that do not overcharge users (see \cref{def:Epir}).
As the payment is enforced by the mechanism, we do not need to distinguish between the ``intended'' and actual payment.
Note that as discussed in \cref{def:BurningRule}, payments may be partially (or fully) burnt, with the remainder transferred as revenue to the miner.
\begin{definition}[Payment rule]
    \label{def:PaymentRule}
    The payment rule $\mathbf{\pay^\auction}:\mathbb{R}^n\rightarrow \mathbb{R}^n$ in auction $\auction$ receives bids $\fee_1, \ldots, \fee_n$, and maps each bid $\fee_i$ to its payment $\pay^\auction_i\left(\fee_i, \mathbf{\fee_{-i}}\right)$.
\end{definition}
\begin{definition}[\Gls{EPIR}]
    \label{def:Epir}
    An auction $\auction$ with payment rule $\mathbf{\pay^\auction}$ is called \gls{EPIR} if bidders do not pay when unallocated: $\alloc_i(\fee_i,\mathbf{\fee_{-i}}) = 0 \Rightarrow \pay^\auction_i\left(\fee_i, \mathbf{\fee_{-i}}\right) = 0$, and pay at most their bids when allocated: $\alloc_i(\fee_i,\mathbf{\fee_{-i}}) = 1 \Rightarrow \pay^\auction_i\left(\fee_i, \mathbf{\fee_{-i}}\right) \le \fee_i$.
\end{definition}

\paragraphNoSkip{Burning rule}
The ``burn'' rule (see \cref{def:BurningRule}) sets the amount of collected fees which is taken out of circulation, and thus cannot contribute to the miner's revenue.
As before, we consider rules that do not ``over-burn'', in the sense that the amount of fees burnt from a transaction's payment cannot exceed its payment.
Similarly to the payment rule, the burning rule is enforced by the \gls{TFM}.
\begin{definition}[Burning rule]
    \label{def:BurningRule}
    The burning rule $\mathbf{\burn^{\auction}}:\mathbb{R}^n\rightarrow \mathbb{R}^n$ in auction $\auction$ receives bids $\fee_1, \ldots, \fee_n$, and maps each bid $\fee_i$ to its ``burn'' amount $\burn^{\auction}_i\left(\fee_i, \mathbf{\fee_{-i}}\right)$, meaning the part of $\fee_i$'s payment that is ``burnt'' and taken out of circulation.
\end{definition}
\begin{definition}[\Gls{EPBB}]
    \label{def:Epbb}
    We say that an auction $\auction$ with payment rule $\mathbf{\pay^\auction}$ and burn rule $\mathbf{\burn^\auction}$ is \gls{EPBB} if the burnt amount cannot exceed a bidder's payment: $\burn^\auction(\mathbf{\fee}) \leq \pay^\auction(\mathbf{\fee})$.
\end{definition}

\begin{myremark}[Dynamic and static \glspl{TFM}]
    \label{rmk:DynamicTFMs}
    The reserve price of \glsxtrshort{EIP}1559, also known as the base fee, is set \emph{dynamically} for each block, according to the utilized capacity of the preceding block, and burns the reserve price for each allocated transaction.
    For dynamic \glspl{TFM}, it is natural to specify an \emph{update rule} that modifies the auction's various rules over time, e.g., by considering the auction's outcomes for prior blocks (similarly to \glsxtrshort{EIP}1559), or the previously submitted bids (a-la Ferreira \emph{et al.} \cite{ferreira2021dynamic}).
    As our focus is on static (i.e., non-dynamic) \glspl{TFM}, we simplify the discussion and do not specify update rules.
\end{myremark}

\subsection{Actors}

\paragraphNoSkip{Users}
Each user has a private value to have a single transaction accepted to the blockchain.
User valuations are drawn \gls{iid} from a probability distribution with a bounded support.
A \emph{bidding strategy} $s$ is a mapping from a valuation vector to a bid vector.
Given a valuation vector, a bid vector and an allocation, we define in \cref{def:MyopicUserUtility} the standard notion of user utility.

\begin{definition}[User utility]
    \label{def:MyopicUserUtility}
    In a single-round auction $\auction$, where the upcoming miner uses allocation function $\alloc$, the utility of a user with valuation $v_i$ sending a transaction with fee $\fee_i$ is:
    $
        u_i(\fee_i, \mathbf{\fee_{-i}} ; v_i)
        \define
        v_i \cdot \alloc_i(\fee_i,\mathbf{\fee_{-i}})
        -
        \pay^\auction_i\left(\fee_i,\mathbf{\fee_{-i}}\right)
        .
    $
\end{definition}

\paragraphNoSkip{Miner}
Miners can choose an allocation strategy $\alloc$ that does not correspond to the intended allocation rule $\alloc^\auction$, and may also create ``fake bids''.
Given some strategy, we define the miner's corresponding utility in \cref{def:MinerUtility}.
For brevity, the allocation is omitted from our notation.

\begin{definition}[Miner utility]
    \label{def:MinerUtility}
    Given payment rule $\mathbf{\pay^\auction}$, burning rule $\mathbf{\burn^\auction}$, and bids $\mathbf{\fee} \in \mathbb{R}_+^{n'}$ where \gls{wlog} the first $n' \le n$ bids are ``real'' users, the miner's utility is:
    $
        u_{miner} (\mathbf{\fee} ; \mathbf{v})
        \define
        \sum_{i=1}^{n'} \left(\pay(b_i, \mathbf{b}_{-i}) - \burn(b_i, \mathbf{b}_{-i}) \right) - \sum_{i=n+1}^{n'} u_i(\fee_i, \mathbf{\fee_{-i}} ; 0)
        .
    $
\end{definition}

\paragraphNoSkip{Joint utility}
\cref{def:MyopicUserUtility,def:MinerUtility} consider our different actors separately, but one may consider the \emph{joint utility} of the miner and a coalition of users.
\begin{definition}[Joint utility]
    \label{def:JointUtility}
    Let $\auction$ be an auction with payment rule $\mathbf{\pay^\auction}$ and burning rule $\mathbf{\burn^\auction}$.
    The \emph{joint utility} of the miner and a coalition of users $C$ under allocation $\alloc$, is:
    $
        u_{joint}(C, \mathbf{\fee}; \mathbf{v})
        \define
        u_{miner} (\mathbf{\fee})
        +
        \sum_{i \in C} u_i(\fee_i, \mathbf{\fee_{-i}} ; v_i).
    $
\end{definition}

\subsection{The TFM Desiderata}
We now define the canonical desired \gls{TFM} properties considered by the literature, and use them to present two auction designs that we examine in our analysis.

\paragraphNoSkip{User incentive compatibility}
A canonical desired property for auctions is that bidding truthfully is \gls{DSIC} for users: the auctioneer's revenue may decrease if they underbid, while bidders may be unable to pay if they overbid.
\begin{definition}[\Glsxtrfull{DSIC}]
    \label{def:Dsic}
    An auction is \gls{DSIC} if it is always best for the bidder to declare its true value, thus for any bidder $i$ with true value $v_i$, for any bid $\fee_i$, and for any bids $\mathbf{\fee_{-i}}$ by the other bidders:
    $u_i(v_i, \mathbf{\fee_{-i}}) \geq u_i(\fee_i, \mathbf{\fee_{-i}})$.
\end{definition}
Bitcoin's \gls{TFM}, given in \cref{def:Pabga}, is \gls{MMIC} and \gls{OCA}-proof, but not \gls{DSIC} \cite{roughgarden2021transaction}.
\begin{definition}[\Glsxtrfull{PABGA}]
    \label{def:Pabga}

    \noindent {\bf Intended allocation rule:} allocate the $\blocksize$ highest-fee pending transactions.

    \noindent {\bf Payment rule:} each allocated transaction pays its entire fee.

    \noindent {\bf Burning rule:} no fees are burnt.
\end{definition}
In the \gls{PABGA}, a winning bidder may be able to lower its bid and still win, if there is a gap between its true value and the losing bids.
In contrast, the \gls{SPA} is \gls{DSIC} \cite{vickrey1961counterspeculation}, where the auction is defined as allocating a single item to the highest bidder, who pays the second-highest bid.
The \gls{UPGA} (\cref{def:Upga}) extends \Glspl{SPA} to $\blocksize > 1$ items.
\begin{definition}[\Glsxtrfull{UPGA}]
    \label{def:Upga}

    \noindent {\bf Intended allocation rule:} allocate the $\blocksize$ highest-bid pending transactions.

    \noindent {\bf Payment rule:} all allocated bids' payment equals the $\left(\blocksize + 1\right)$-th highest bid.

    \noindent {\bf Burning rule:} no fees are burnt.
\end{definition}

\paragraphNoSkip{Miner incentive compatibility}
The \gls{UPGA} (and thus, the \gls{SPA} on a single item), has an important shortcoming: since all payments equal the $\left( \blocksize + 1 \right)$-th largest bid, the miner may add a ``fake'' bid equal to the $\blocksize$ largest bid to increase its revenue.
To prevent such malfeasance, \glspl{TFM} should be \gls{MMIC}.
\begin{definition}[\Glsxtrfull{MMIC}]
    \label{def:Mmic}
    We say that a \gls{TFM} is \gls{MMIC} if the miner cannot strictly increase its utility by creating bids of its own, or by deviating from the intended allocation rule.
\end{definition}

\paragraphNoSkip{Notions of collusion}
A coalition of users and miners may collude to increase their joint utility.
When colluding, we allow both the miner and users to freely deviate from their corresponding ``honest'' behaviors.
The literature explored two collusion-resistance notions: \gls{OCA}-proofness and \gls{SCP}.
\begin{definition}[\Glsxtrlong{OCA}-proof (\glsxtrshort{OCA}-proof)]
    \label{def:Oca}
    An auction is $c$-\gls{OCA}-proof if a coalition of a miner and up to $c$ bidders cannot increase their joint utility to be higher than that of the intended allocation's winning coalition.
    If an auction is $c$-\gls{OCA}-proof for any $c$, we say it is \gls{OCA}-proof.
\end{definition}
\begin{definition}[\Glsxtrfull{SCP}]
    \label{def:Scp}
    An auction $\auction$ is $c$-\gls{SCP} if a collusion of up to $c$ bidders and a miner cannot increase their joint utility by deviating from the honest protocol.
    If $\auction$ is $c$-\gls{SCP} for any $c$, we say it is \gls{SCP}.
\end{definition}
\begin{myremark}
    Gafni and Yaish \cite{gafni2024barriers} find that \gls{OCA}-proofness is weaker than \gls{SCP}, as the latter implies the former, but not vice-versa.
    Intuitively, a miner-user collusion that breaks \gls{OCA}-proofness allows the colluding coalition to ``steal'' from the \gls{TFM} (e.g., they increase their joint utility above the one that is prescribed by the mechanism's intended allocation), while a collusion that breaks \gls{SCP} increases the colluding coalition's joint utility at the expense of other users.
\end{myremark}

\section{Characterizing \gls{OCA}-proof Auctions}
We now present our main working tool, \cref{thm:OcaChar}, that we use to show revenue-optimality for both the discrete and Bayesian settings.
We do so by extending the characterization of single-item $1$-\gls{OCA}-proof \glspl{TFM} by Gafni and Yaish \cite{gafni2024barriers} (see their Lemma~\href{https://arxiv.org/pdf/2402.08564v1\#lemma.4.4}{4.4}) to multiple items.
Their characterization shows that such auctions are exactly those with non-negative reserve prices, where if the highest bid is above the reserve, it is allocated and the reserve is burnt.

\begin{theorem}[Characterization of \gls{OCA}-proofness]
    \label{thm:OcaChar}
    Any multi-item deterministic \gls{OCA}-proof mechanism $\alloc, \pay, \burn$ is of the following form:
    Its burn rule satisfies the property that the total amount of burnt funds can be described using a monotonically non-decreasing size-based function $\burn^{\text{\tiny cardinal}}:[\blocksize] \rightarrow R_+$, so that if $\sum_{i=1}^n \alloc_i(\fee_i, \mathbf{\fee_{-i}}) = k$, then $\burn^{\text{\tiny cardinal}}(k) \define \sum_{i=1}^n \burn_i(\fee_i, \mathbf{\fee_{-i}})$.
    The allocation rule does not allocate more than $\blocksize'$ bids (where $\blocksize'$ is possibly lower than the externally enforced block size $\blocksize$), and, assuming bids are in decreasing order, it allocates the first $k^*$ bids, for some
    $k^* \in \argmax_{0 \leq k \leq \blocksize'} \left( \left( \sum_{i=1}^k \fee_i \right) - \burn^{\text{\tiny cardinal}}(k) \right).$
\end{theorem}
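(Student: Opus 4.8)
The plan is to leverage the single-item characterization of Gafni and Yaish \cite{gafni2024barriers} by a ``pairwise perturbation'' argument, extracting structure one manipulation at a time. First I would fix a bid vector $\mathbf{\fee}$ sorted in decreasing order and compare the honest outcome against the outcome the colluding grand coalition (miner plus all bidders) could achieve. Since OCA-proofness says the coalition cannot beat the intended allocation's joint utility, and since the coalition can in particular \emph{replicate} the intended allocation, the joint utility under $(\alloc,\pay,\burn)$ must be \emph{exactly} the maximum joint utility achievable over all feasible (allocation, payment, burn) triples the coalition could write down — subject to EPIR and EPBB. The joint utility of any allocation $\alloc$ equals $\sum_{i : \alloc_i = 1} \fee_i - (\text{total burnt})$, because payments are internal transfers between users and miner and cancel in the sum. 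So the honest mechanism must, for every $\mathbf{\fee}$, select an allocation maximizing $\sum_{\text{allocated}} \fee_i$ minus the burn it incurs, which already forces it to allocate a prefix of the sorted bids (swapping in a higher bid for a lower one weakly increases the value term) and to pick the prefix length optimally.

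The remaining work is to show the burn depends only on $k = \sum_i \alloc_i$, i.e.\ to produce the function $\burn^{\text{\tiny cardinal}}$. Here I would argue by contradiction: suppose two bid vectors $\mathbf{\fee}, \mathbf{\fee}'$ both lead to $k$ allocated transactions but to different total burns, say $\burn(\mathbf{\fee}) < \burn(\mathbf{\fee}')$. The idea is to construct a ``hybrid'' bid vector and a coalition deviation that imports the cheaper burn, contradicting optimality of the honest outcome on $\mathbf{\fee}'$. Concretely, because the coalition controls which bids are ``real'' versus miner-created and because the mechanism is a fixed function of the full bid profile, one can perturb the losing bids of $\mathbf{\fee}'$ (which do not affect the value term once the winning prefix is fixed, provided they stay below the cutoff) until the winning-prefix entries match those of $\mathbf{\fee}$, and track how the burn changes; monotonicity in $k$ then comes from comparing length-$k$ and length-$(k+1)$ prefixes of a single vector, where allocating one extra bid can only add (weakly) to the burn if OCA-proofness is to hold against dropping that bid. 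The cap $\blocksize' \le \blocksize$ emerges as the largest $k$ that is ever selected as an $\argmax$; larger blocks are simply never optimal for the coalition and hence never chosen by the honest rule.

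I expect the main obstacle to be the burn-invariance step: unlike the single-item case, here many bids are allocated, the burn rule is a per-transaction vector $\burn_i$, and EPBB only constrains $\burn_i \le \pay_i \le \fee_i$ individually, so one must be careful that the perturbations used to equalize the winning prefixes remain \emph{feasible} bid vectors and that the induced deviations are genuine coalition deviations (the coalition may add fake bids and relabel, but it cannot change the mechanism). Making the hybrid-vector construction fully rigorous — in particular ensuring that moving losing bids does not inadvertently change the identity of the allocated set or violate the reserve structure inherited from \cite{gafni2024barriers} — is where the care lies; the monotonicity and the prefix/$\argmax$ structure should then follow relatively cleanly from the joint-utility-maximization characterization established in the first step.
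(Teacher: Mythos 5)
Your proposal contains the right ingredients --- payments cancel inside the grand coalition so joint utility is value minus burn, the burn must be size-based, and the allocation must be a joint-utility-maximizing prefix of the sorted bids --- but two gaps remain. The first and most important is a missing \emph{feasibility} step: to conclude that the honest rule selects $k^* \in \argmax_{0 \leq k \leq \blocksize'}\bigl(\sum_{i=1}^k \fee_i - \burn^{\text{\tiny cardinal}}(k)\bigr)$, you must show that \emph{every} $k$ between $0$ and $\blocksize'$ is actually realizable by some coalition deviation; otherwise OCA-proofness only forces the honest outcome to beat the joint utility of the realizable sizes, and the set of achievable sizes could a priori have holes (e.g., a mechanism that only ever allocates $0$ or $5$ bids, never $1$ through $4$). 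You define $\blocksize'$ as ``the largest $k$ ever selected as an argmax,'' which sidesteps rather than proves this downward-closure. The paper closes the hole with a dedicated argument: assuming some $k < k'$ is never achievable while $k'$ is, it takes $k$ bidders with valuation $\burn^{\text{\tiny cardinal}}(k')+1$, has the miner inject fake bids to reproduce a profile on which $k'$ bids are allocated (hence all $k$ real bidders are allocated), and shows the resulting joint utility strictly exceeds anything attainable when at most $k-1$ of them can ever be allocated. Some argument of this type is needed for the theorem as stated.

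The second issue is ordering. You derive the prefix/argmax structure in your first step, before knowing the burn is size-based, by asserting the honest mechanism maximizes ``$\sum \fee_i$ minus the burn it incurs.'' But the coalition's deviations change the bid profile and hence the burn, so this objective is not well-defined as a function of the allocation alone until you have established that the total burn depends only on the number of allocated bids. The paper therefore proves burn-invariance and monotonicity of $\burn^{\text{\tiny cardinal}}$ first, then feasibility, and only then the argmax; as written, your first step quietly assumes what your second step is meant to prove. Your hybrid-vector argument for burn-invariance is a reasonable, more detailed version of the paper's one-line collusion argument, and your observation that payments are internal transfers is exactly the paper's; but the claim that the coalition optimizes over all ``(allocation, payment, burn) triples it could write down'' overstates its power --- it can only manipulate bids and the realized allocation, with payments and burns then dictated by the fixed mechanism, which is precisely why the structure of the burn rule must be pinned down before the allocation rule can be characterized.
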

\begin{proof}
    We begin by characterizing the burn function.
    For a given number of allocated bids $k$, the sum of all burnt fees must be constant, or the bidders and the miner can collude to lower it and increase their joint utility.
    For each possible amount of allocated bids $k$ and the matching total burn amount $\burn_k$, we define the function $\burn^{\text{\tiny cardinal}}(k) = \burn_k$.
    The function must be monotonically non-decreasing, i.e., $\forall k \le k': \burn^{\text{\tiny cardinal}}(k) \le \burn^{\text{\tiny cardinal}}(k')$.
    Otherwise, the miner could collude with $k' > k$ bidders to lower the burn, as $\burn^{\text{\tiny cardinal}}(k') < \burn^{\text{\tiny cardinal}}(k)$.

    Let $\alpha^{feasibility}: N \rightarrow \{0,1\}$ be the size-based ``allocation feasibility function'' that determines whether there is a setting where the mechanism allocates a certain number of bids $k$:
    $
    \alpha^{feasibility}(k) = \begin{cases}
    1 & \exists \mathbf{b} \text{ s.t. } \sum_{i=1}^n \alloc_i(\mathbf{b}) = k \\
    0 & Otherwise.
    \end{cases}
    .
    $
    We proceed by showing that the size-based allocation feasibility function is monotonically non-increasing with the number of bidders, i.e., there is some ``inferred'' block size $\blocksize'$ (possibly infinite, or lower than the externally imposed $\blocksize$), so that it is possible to achieve any number of allocated bids as high as $\blocksize'$, but not higher.
    Assume toward contradiction that there is a bid vector $\mathbf{\fee}$ so that $k'$ bids are allocated, but for some $k < k'$ there is no vector such that exactly $k$ are allocated.
    Consider $k$ bidder valuations all equal to $\burn^{\text{\tiny cardinal}}(k') + 1$.
    Since by assumption at most $k-1$ may be allocated, the bidders' joint utility with the miner cannot exceed $u = (k-1)\cdot (\burn^{\text{\tiny cardinal}}(k') + 1)$.
    The \gls{OCA} where the bidders and the miner bid $\mathbf{\fee}$ (note the miner is required to inject fake bids) results in all $k$ bidders being allocated, by our assumption.
    This collusion gives a joint utility of at least $k\cdot (\burn^{\text{\tiny cardinal}}(k') + 1) - \burn^{\text{\tiny cardinal}}(k') \geq u + 1$.

    Finally, for any number of allocated bids $k$, the allocation rule must allocate the highest $k$ bids, or we could permute the bids to make sure the highest $k$ bidders are allocated, and that would be an \gls{OCA}.
    Notice that $\left(\sum_{i=1}^k \fee_i\right) - \burn^{\text{\tiny cardinal}}(k)$ is the joint utility that the bidders and the miner can achieve if $k$ bids are allocated.
    Since it is possible to allocate any number of bids between $0$ and $\blocksize'$, one can devise an \gls{OCA} that obtains this joint utility.
    So, to be \gls{OCA}-proof, the allocation rule must allocate $k^*$ bids that maximize this expression, i.e., $k^* \in \argmax_{0 \leq k \leq \blocksize'} \left( \left(\sum_{i=1}^k \fee_i\right) - \burn^{\text{\tiny cardinal}}(k) \right)$.
\end{proof}

\section{Optimal \gls{SCP} \glspl{TFM} via Discrete Bids}
\label{sec:Discrete}
We present the first \gls{TFM} that satisfies the \gls{TFM} desiderata and obtains positive revenue, in a setting where bids and valuations are discrete.
Although prior work shows that good \glspl{TFM} which satisfy partial or relaxed versions of the desiderata necessarily produce $0$ revenue even when blocks are infinite in size \cite{chung2023foundations}, the corresponding proofs implicitly assume that bids are continuous, as is the standard in auction theory.
However, cryptocurrencies specify fees in some minimal denomination, e.g., \emph{sat} is Bitcoin's smallest unit, and \emph{wei} is Ethereum's.

\subsection{The Good Toy Auction}
In \cref{def:Gta} we present our \gls{TFM}, the \emph{\gls{GTA}}.
Following previous work \cite{goldberg2001competitive,rizun2015transaction,lavi2019redesigning,chung2023foundations}, we first consider the case where blocks are not size-limited.
Later, we show that this is a necessary condition to satisfy all desired \gls{TFM} properties (\cref{thm:ScpImpliesLowRev}).
Despite the abundance of this setting, it is non-trivial: when block-rewards are low (e.g., Bitcoin's rewards converge to 0\iflong \cite{nakamoto2008bitcoin}\fi), fees are an important revenue source, implying that fee revenue should be maintained even if blocks are allowed to be large to preserve the security of the system \cite{lavi2019redesigning}.
In \cref{rmk:GtaFinite} we explain how to modify the \gls{GTA} for finite blocks.
\begin{definition}[\Glsxtrfull{GTA}]
    \label{def:Gta}

    \noindent {\bf Bid space:} bids are discrete, for any bid $\fee_i$: $\fee_i \in \{0\}\cup\mathbb{N}$.

    \noindent {\bf Intended allocation rule:} allocate all pending transactions that bid $1$ or more.

    \noindent {\bf Payment rule:} the payment of an allocated transaction is equal to $1$.

    \noindent {\bf Burning rule:} no fees are burnt.
\end{definition}
The intended allocation of \cref{def:Gta} implicitly assumes that blocks are not size-limited, which is reasonable for settings in which the demand is low, implying that even if there are some capacity constraints, they are usually not reached.
In \cref{rmk:GtaFinite}, we briefly outline how to modify the \gls{GTA} so that it maintains its good properties, even when considering finite blocks.
\begin{myremark}[\gls{GTA} for finite blocks]
    \label{rmk:GtaFinite}
    The \gls{GTA} can be adapted to the case where blocks are size-limited.
    This adaptation preserves most of its good properties, but necessitates relaxing the strict \gls{SCP} requirement to a ``usually'' \gls{SCP} requirement, a-la Roughgarden's usually \gls{OCA} requirement \cite{roughgarden2021transaction}:
    To be usually \gls{SCP}, a \gls{TFM} is required to only be \gls{SCP} when its reserve price $\reserve$ is high enough, such that the amount of available transactions paying fees in excess of $\reserve$ is lower than the block size limit.
    This implies the \gls{TFM} satisfies the \gls{SCP} property when the reserve price is set correctly.
    To adapt the \gls{GTA} accordingly, one can use a dynamic reserve price that is burnt, similarly to \glsxtrshort{EIP}1559 (see \cref{rmk:DynamicTFMs}).
\end{myremark}

In \cref{thm:Gta} we show that the \gls{GTA} satisfies the \gls{TFM} desiderata.
\begin{restatable}{theorem}{thmGta}
    \label{thm:Gta}
    \gls{GTA} is \gls{DSIC}, \gls{MMIC}, \gls{OCA}, \gls{SCP}, and has positive revenue.
\end{restatable}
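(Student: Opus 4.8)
The plan is to verify all five claims essentially by inspection, exploiting that the \gls{GTA} is a posted-price mechanism at price $1$ with an unbounded block, so that, conditioned on being allocated, a transaction pays exactly $1$ — or exactly $0$ if it bid $0$ and a deviating miner nonetheless allocated it, which is forced by \gls{EPIR} (thus ``an allocated transaction pays $1$'' must be read as paying $\min(1,\fee_i)$, and no burn is ever charged). For \gls{DSIC}, I would fix a bidder $i$ with integer value $v_i$ and an arbitrary $\mathbf{\fee_{-i}}$; under the intended rule the only effect of $i$'s declared fee is whether it is allocated, i.e.\ whether $\fee_i \ge 1$ (there is no competition, as the block is unbounded). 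If $v_i \ge 1$, bidding $v_i$ gives utility $v_i - 1 \ge 0$, at least the $0$ of being unallocated; if $v_i = 0$, bidding $0$ gives $0$, beating the $-1$ of a positive — hence allocated — bid. So truthful bidding is weakly dominant.

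For \gls{MMIC}, I would check that neither of the miner's levers helps. Since nothing is burnt, by \cref{def:MinerUtility} every allocated real transaction contributes its full payment to the miner's revenue, so dropping a paying transaction only loses revenue and allocating a bid-$0$ transaction is revenue-neutral by \gls{EPIR}; thus no deviation on the real bids from the intended allocation is profitable. An allocated fake bid contributes exactly $(\pay_j - \burn_j) - (-(0\cdot\alloc_j - \pay_j)) = -\burn_j = 0$ to the miner's utility and, because the \gls{GTA}'s payment is bid-independent, changes no other payment, so fake bids are useless as well. Honest behavior is therefore (weakly) optimal for the miner, so it cannot strictly increase its utility.

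For \gls{OCA}-proofness and \gls{SCP}, take a coalition $C$ of bidders together with the miner and expand the joint utility of \cref{def:JointUtility}: each coalition payment enters the miner's revenue with a $+$ and bidder $i$'s utility with a $-$ and cancels, all burns vanish, and the fake-bid terms contribute $0$ as above, leaving
\[
    u_{joint}(C,\mathbf{\fee};\mathbf{v}) \;=\; \sum_{i\in C} v_i\,\alloc_i \;+\; \sum_{i\notin C}\pay_i .
\]
Each $v_i\alloc_i \le v_i$, and each non-coalition $\pay_i \le \1{\fee_i \ge 1}$, and both bounds are met simultaneously when the miner allocates exactly the coalition bidders with $v_i \ge 1$ and all non-coalition bidders with $\fee_i \ge 1$ — which is precisely the outcome of honest/intended coalition behavior, since the coalition cannot alter the non-coalition bids. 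So no joint deviation beats the honest joint utility, which is also the joint utility of the intended allocation's winning coalition; this gives both \gls{SCP} and \gls{OCA}-proofness (and is consistent with \cref{thm:OcaChar} for the identically-zero burn). The unbounded block is what lets this bound be attained with no packing constraint.

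Finally, under truthful bids the intended rule allocates every bidder with $v_i \ge 1$, each paying $1$ with nothing burnt, so the miner's revenue equals the number of such bidders — strictly positive on any realization with at least one such bidder (hence with probability approaching $1$ whenever the valuation distribution places positive mass on $\{v \ge 1\}$), and in fact linear in the number of allocated transactions, which \cref{thm:ScpImpliesLowRev} will show to be optimal up to a constant. I expect the only delicate point to be bookkeeping: getting the sign of the fake-bid term in \cref{def:MinerUtility} right so that a zero burn genuinely neutralizes fake bids, and reading the payment rule as $\min(1,\fee_i)$ so that an off-protocol allocation of a bid-$0$ transaction collects $0$ and not $1$ — which is exactly what keeps the \gls{OCA}/\gls{SCP} identity tight rather than exploitable.
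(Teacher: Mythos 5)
Your proposal is correct and follows essentially the same route as the paper: a posted-price case analysis for \gls{DSIC}, separability plus revenue maximization for \gls{MMIC}, and the observation that with zero burn the coalition's payments cancel in the joint utility so that honesty is jointly optimal, yielding both \gls{SCP} and \gls{OCA}-proofness. The only difference is presentational — you inline direct arguments where the paper invokes \cref{fct:SeparableMaximizingMmic} and \cref{fct:JointMaximizingOca}, and your explicit $\min(1,\fee_i)$/\gls{EPIR} bookkeeping for off-protocol allocations is a slightly more careful rendering of what the paper leaves implicit.
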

We provide the corresponding proof in \cref{sec:Proofs}.
Intuitively, \gls{DSIC} is satisfied as the \gls{TFM}'s reserve price is $1$, which is the lowest value allowed, implying that users cannot profit from lower bids (i.e., 0) as they necessarily lead to lower utility. Furthermore, the miner enjoys strictly positive revenue when there is at least one user with a positive valuation, as the mechanism's intended allocation ensures that all users that can contribute to the revenue are allocated, and moreover, the discrete setting and the lack of burning imply that user payments are transferred to the miner, and cannot be possibly split among agents.
Finally, the \gls{MMIC} property is satisfied according to \cref{fct:SeparableMaximizingMmic}, as the mechanism's intended allocation is revenue maximizing, and the mechanism's payment and burning rules are separable (per \cref{def:SeparableRules}).
A result equivalent to \cref{fct:SeparableMaximizingMmic} was proven as Theorem~\href{https://arxiv.org/pdf/2106.01340v3\#theorem.5.2}{5.2} by Roughgarden \cite{roughgarden2021transaction}, we highlight several differences between the statements of the results in \cref{rmk:SeparableMaximizingMmic}.
\begin{definition}[Separable rules]
    \label{def:SeparableRules}
    A rule $f$ is \emph{separable} if the $i$-th bid's output does not depend on $\mathbf{\fee_{-i}}$:
    $\forall \mathbf{\fee_{-i}}, \mathbf{\fee_{-i}'}: f_i\left(\fee_i, \mathbf{\fee_{-i}}\right) = f_i\left(\fee_i, \mathbf{\fee_{-i}'}\right)$.
\end{definition}
\begin{restatable}[Separability + revenue maximization $\Rightarrow$ \gls{MMIC} \cite{roughgarden2021transaction}]{theorem}{SeparableMaximizingMmic}
    \label{fct:SeparableMaximizingMmic}
    An auction $\auction$ with separable payment and burning rules $\mathbf{\pay^\auction}, \mathbf{\burn^\auction}$, and an intended allocation rule $\mathbf{\alloc^\auction}$ which is revenue maximizing under $\mathbf{\pay^\auction}$, is \gls{MMIC}.
\end{restatable}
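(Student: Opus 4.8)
The plan is to reduce an arbitrary miner strategy first to one that injects no fake transactions, and then to a plain choice of which real bids to include, at which point revenue-maximality of $\alloc^\auction$ finishes the argument. Fix the real bids $\fee_1,\dots,\fee_n$ with valuations $\mathbf{v}$; a deviating miner appends fake bids $\fee_{n+1},\dots,\fee_{n'}$ and chooses any allocation $\alloc$ with $\sum_i\alloc_i\le\blocksize$. A fake transaction that the miner includes contributes $-\burn_i(\fee_i)\le 0$ to $u_{miner}$ — the miner funds its fee $\pay_i(\fee_i)$ and is returned only the non-burnt part $\pay_i(\fee_i)-\burn_i(\fee_i)$ — and contributes $0$ if excluded; meanwhile, by \cref{def:Epbb}, each real transaction it includes contributes $\pay_i(\fee_i)-\burn_i(\fee_i)\ge 0$. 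The honest benchmark is the miner running $\alloc^\auction$ on $\mathbf{\fee}$ with no injections, whose utility is $\sum_i \alloc^\auction_i(\mathbf{\fee})\cdot\bigl(\pay_i(\fee_i)-\burn_i(\fee_i)\bigr)$.

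The first step is to show the miner may drop all fake transactions without loss, and this is exactly where separability (\cref{def:SeparableRules}) is used: deleting a fake bid alters neither the payment nor the burn of any remaining bid, so its only effects are removing that bid's own non-positive contribution (a gain of $\burn_i(\fee_i)\ge 0$) and freeing one block slot. If some real bid $j$ is currently excluded, the miner refills the vacated slot with it for an additional gain of $\pay_j(\fee_j)-\burn_j(\fee_j)\ge 0$; otherwise it leaves the slot empty. Either way $u_{miner}$ weakly increases, and iterating eliminates every fake bid. Hence it suffices to bound deviations in which the miner merely includes some feasible subset $S\subseteq\{1,\dots,n\}$ of the real bids, earning $\sum_{i\in S}\bigl(\pay_i(\fee_i)-\burn_i(\fee_i)\bigr)$.

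The second step invokes the hypothesis that $\alloc^\auction$ is revenue-maximizing under $\pay^\auction$: among feasible allocations it maximizes the miner-bound revenue $\sum_i \alloc_i\cdot\bigl(\pay_i-\burn_i\bigr)$ (when nothing is burnt, as in the \gls{GTA}, this is just maximizing total payments). Applied to the subset $S$ from the previous step, this gives $\sum_{i\in S}\bigl(\pay_i-\burn_i\bigr)\le\sum_i \alloc^\auction_i(\mathbf{\fee})\cdot\bigl(\pay_i-\burn_i\bigr)$, and the right-hand side is exactly the honest utility. Chaining the two steps, no miner deviation strictly beats honest behavior, which is the definition of \gls{MMIC} (\cref{def:Mmic}).

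I expect the ``no fake bids'' reduction to be the main obstacle, for two reasons. The bookkeeping has to be carried out in full generality: one must argue that refilling a vacated slot with an excluded real bid is always available and always weakly profitable, and that the iteration over fake bids terminates, all while respecting the $\blocksize$ constraint. More conceptually, the reduction genuinely fails without separability — a fake bid could otherwise inflate or depress the payments of the real bids, so that ``inject a fake transaction'' is no longer a local operation and its net effect is not simply $-\burn_i$. A minor additional point to pin down is that, once fees may be burnt, ``revenue-maximizing under $\pay^\auction$'' must be read as maximizing $\sum_i\alloc_i(\pay_i-\burn_i)$ rather than $\sum_i\alloc_i\pay_i$ (the two coincide for the \gls{GTA}), since otherwise the final inequality can fail.
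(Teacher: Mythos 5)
Your proof is correct and is essentially the argument the paper relies on: the paper itself only cites Roughgarden's Theorem~5.2 and adds a remark translating between burn conventions, and your self-contained reconstruction (fake bids cost the miner exactly their burn, separability makes their removal a local operation, and revenue-maximality of $\alloc^\auction$ over real-bid subsets closes the argument) is precisely that proof written out in the paper's notation. You also correctly identify the one subtlety the paper's remark addresses, namely that ``revenue-maximizing under $\pay^\auction$'' must be read as maximizing $\sum_i \alloc_i(\pay_i-\burn_i)$ once burning is allowed.
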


\subsection{Revenue Bounds}
In \cref{thm:ScpImpliesLowRev} we show that even when considering a minimal set of properties that one may want in an auction, adding the requirement that the \gls{DSIC} and 1-\gls{SCP} properties should be satisfied restricts the revenue of discrete-bid auctions to be asymptotically linear in the number of allocated bidders, thus implying the approximate optimality of the \gls{GTA} within this class.
In our case, we consider mechanisms that do not overcharge users (i.e., that are \gls{EPIR} and \gls{EPBB}).
\begin{restatable}{theorem}{thmScpImpliesLowRev}
    \label{thm:ScpImpliesLowRev}
    Any \gls{EPIR}, \gls{EPBB}, \gls{DSIC} and 1-\gls{SCP} discrete bid \gls{TFM} must have infinite blocksize and can obtain a revenue of at most $O(k)$, where $k$ is the number of allocated bids.
\end{restatable}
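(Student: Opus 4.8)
The plan is to prove the two parts of the statement separately: first, that any \gls{EPIR}, \gls{EPBB}, \gls{DSIC} and 1-\gls{SCP} discrete-bid \gls{TFM} must have infinite blocksize, and second, that its revenue from $k$ allocated bids is $O(k)$. For the infinite-blocksize claim, I would argue by contradiction: suppose the (inferred) block size $\blocksize'$ is finite. Consider a bid profile in which $\blocksize' + 1$ bidders all bid some large value $M$ (well above any reserve). By \gls{DSIC}, truthful bidding is optimal, and by a standard Myerson-style argument \gls{DSIC} forces a monotone allocation with the "critical-bid" payment structure, so with symmetric high bids exactly $\blocksize'$ of them are allocated and one is excluded. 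The excluded bidder has positive value for being allocated but gets zero utility; meanwhile, the miner could collude with that excluded bidder — e.g., the excluded bidder raises its reported bid or the miner reorders/drops to swap it in — and by picking the collusion carefully (swapping the excluded high-valuation bidder in for one of the symmetric winners) the coalition strictly increases joint utility. The key point is that because all $\blocksize'+1$ bidders are symmetric high-value users, the coalition of the miner plus these bidders can redistribute allocation among themselves without changing the allocated count, but the 1-\gls{SCP} violation needs a coalition of only one user plus the miner; so the cleaner route is: with a finite cap, there is a bid profile where some user with value exceeding its payment is left out while a lower (or equal) winning bid is in, and the miner-plus-that-user coalition deviating to include that user breaks 1-\gls{SCP}. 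Making this swap-and-deviate argument fully rigorous — checking that the deviation is available to the miner (reordering/omitting is always allowed by \cref{def:Mmic}-style miner power) and that joint utility strictly rises — is the first place I expect friction, because \gls{DSIC} payments are not literally "pay your bid," so I must track the critical-bid payments through the swap.

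For the revenue bound, I would lean on \cref{thm:OcaChar} — no, more precisely on the 1-\gls{SCP} structure directly, together with \gls{DSIC}. The idea: fix a bid profile with $k$ allocated bids and total revenue $\mathrm{Rev}$. By \gls{DSIC} and \gls{EPIR}, each allocated bidder $i$ pays at most $\fee_i$ and has a critical threshold $\theta_i \le \fee_i$ below which it would not be allocated; revenue is $\sum_i \pay_i \le \sum_i \fee_i$. The discreteness of bids is what gives us teeth: I want to show that each allocated bidder's payment is bounded by a constant (independent of the bid values), which immediately yields $\mathrm{Rev} = O(k)$. Suppose for contradiction some allocated bidder pays more than $2$ (say). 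Then consider the 1-coalition of the miner plus that bidder: the bidder lowers its reported bid to just above its critical threshold if that helps, or — better — we compare against the outcome where this single user instead bids $1$ (the minimum positive discrete bid). If lowering the bid to a small discrete value still keeps the bidder allocated (which it does if the critical threshold is itself small), the bidder's utility strictly rises while the miner's revenue drops by less, or we need the miner to compensate; the crucial observation is that the miner's loss from one bidder paying less is what the coalition keeps, so joint utility is invariant under such unilateral bid reductions by a coalition member *unless* the reduction changes the allocation. Hence the real content is: 1-\gls{SCP} plus \gls{EPBB} forces, for each single allocated bidder, that the coalition of the miner and that bidder cannot strictly gain — which pins the payment to essentially the bidder's critical value, and discreteness plus the infinite-blocksize conclusion (so thresholds behave like a reserve) forces that critical value to be bounded, e.g. by the reserve or by a small constant.

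I would structure the revenue argument as: (i) invoke the first part so that $\blocksize = \infty$; (ii) argue via 1-\gls{SCP} that the mechanism admits a (possibly trivial) reserve $\reserve$, and that any allocated bid pays no more than some function of $\reserve$ and the bid just below it — but crucially, with infinite blocksize every bid at or above $\reserve$ is allocated (otherwise the excluded one plus the miner deviate, as in part one), so the "marginal losing bid" is always below $\reserve$; (iii) conclude by \gls{DSIC} monotonicity that each allocated bidder's payment is at most $\reserve$ (its critical threshold cannot exceed $\reserve$ since bidding $\reserve$ already wins), hence $\mathrm{Rev} \le k \cdot \reserve = O(k)$. The main obstacle, and where I'd spend the most care, is step (ii)–(iii): showing that with infinite blocksize the threshold for each bidder collapses to the global reserve rather than depending on the other bids — this is exactly where the combination of \gls{DSIC} (monotone, threshold payments), 1-\gls{SCP} (no single-user-plus-miner deviation), and \gls{EPBB} (burn can't exceed payment) must all be used simultaneously, and getting the quantifier order right on "for every bid profile, every allocated bidder" is the delicate part. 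The discreteness assumption enters to rule out the degenerate continuous case of \cite{chung2023foundations} and to make "the minimum positive bid is $1$" a meaningful, exploitable deviation.
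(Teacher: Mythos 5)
Your decomposition matches the paper's (first force infinite blocksize, then bound the per-bid revenue), but both halves have genuine gaps where you locate the ``friction.'' For the finite-blocksize contradiction, the deviation you propose --- swapping an excluded high-value bidder in --- does not produce a strict gain: with $\blocksize'+1$ symmetric bids at $M$, the critical payment of a winner is $M$ itself (lowering the bid at all loses against $\blocksize'$ others at $M$), so the swapped-in bidder's utility is still $0$; and for a mechanism satisfying the \cref{thm:OcaChar} characterization the allocation always takes the \emph{highest} bids, so your ``cleaner route'' profile (a lower bid in, a higher bid out) does not exist. The deviation that actually works is the opposite one, a shill: take $\blocksize$ high bids plus one bidder with value $0$; the zero-value bidder raises its bid to sit just below the winners, which raises every winner's critical payment, while the total burn is unchanged because by \cref{thm:OcaChar} it depends only on the \emph{number} of allocated bids. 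The entire payment increase therefore flows to the miner, so the miner-plus-zero-bidder coalition strictly gains, violating $1$-\gls{SCP}. Note this violation is of the ``extract more from third parties'' kind, not the allocative-efficiency kind you were reaching for.

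For the revenue bound, step (ii)--(iii) assumes the conclusion. With infinite blocksize, the critical bid of an allocated bidder is governed by the marginal burn $\burn^{\text{\tiny diff}}(k)$ at the relevant cardinality, not by a single global reserve; a priori $\burn^{\text{\tiny diff}}$ could grow without bound with $k$, in which case payments are unbounded and nothing ``collapses to $\reserve$.'' The heart of the proof is a two-sided pinning of $\burn^{\text{\tiny diff}}$: an upper bound $\burn^{\text{\tiny diff}}(k) \leq \burn^{\text{\tiny diff}}(1) + 2$ obtained from a $1$-\gls{SCP} argument on a carefully chosen sequence of bid profiles of increasing size (locating the first index where the critical payment jumps and exhibiting a profitable miner-plus-one-bidder deviation there), and a lower bound $\burn^{\text{\tiny diff}}(k) \geq \burn^{\text{\tiny diff}}(1) - 2$ from \gls{EPBB} (otherwise the cumulative burn would exceed the cumulative payments). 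Only then does each allocated bid pay at most $\burn^{\text{\tiny diff}}(1) + 3$, and --- a point your sketch drops --- the revenue is payment \emph{minus} burn, so the $O(k)$ constant comes from the difference $(\burn^{\text{\tiny diff}}(1)+3) - (\burn^{\text{\tiny diff}}(1)-2) = 5$ per bid, not from $k \cdot \reserve$. Your bound $k\cdot\reserve$ is still $O(k)$ for a fixed mechanism, but you have not justified that the threshold is $\reserve$, and the burn subtraction is what makes the constant mechanism-independent.
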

A tight bound can be reached by furthermore requiring separability.
\begin{restatable}{theorem}{thmSeparableScpImpliesLowRev}
    \label{thm:SeparableScpImpliesLowRev}
    Any \gls{EPIR}, \gls{EPBB}, \gls{DSIC}, 1-\gls{SCP} and separable discrete bid \gls{TFM} obtains a revenue of at most $k$, where $k$ is the number of allocated bids.
\end{restatable}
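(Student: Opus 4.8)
The plan is to combine the structural constraints imposed by the four properties on a per-bidder basis, exploiting separability to reduce everything to a single-bidder analysis. First I would observe that by \cref{thm:ScpImpliesLowRev} the blocksize must be infinite, so I may freely add bidders without crowding anyone out. Because the payment rule $\pay$ and burning rule $\burn$ are separable, write $\pay_i(\fee_i)$ and $\burn_i(\fee_i)$ for the payment and burn of bidder $i$ as functions of $\fee_i$ alone; similarly, I claim that under separability of $\burn$ together with \gls{OCA}-type reasoning (or directly from 1-\gls{SCP}), the allocation $\alloc_i(\fee_i)$ also effectively depends only on $\fee_i$ — or at least that for any allocated profile one can argue bidder-by-bidder. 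The revenue contributed by an allocated bidder $i$ is $\pay_i(\fee_i) - \burn_i(\fee_i)$, and the total revenue is the sum over the $k$ allocated bidders, so it suffices to show each such term is at most $1$.

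The key step is to bound $\pay_i(\fee_i) - \burn_i(\fee_i) \le 1$ for every allocated bidder. I would do this by a two-part argument. (i) From \gls{DSIC} and \gls{EPIR}, standard Myerson-style reasoning on the single-parameter problem for bidder $i$ (monotone allocation, payment pinned down by the allocation) forces the payment to be of threshold form: an allocated bidder pays exactly its critical bid $\hat b_i$ (the infimum of fees at which it is allocated), and $\pay_i(\fee_i) \le \fee_i$. (ii) The 1-\gls{SCP} property, applied to the coalition of the miner with the single allocated bidder $i$, says this pair cannot profit by a joint deviation at the expense of the other users. The honest joint utility of this pair is $v_i \alloc_i - \burn_i(\fee_i)$ (the payment cancels between user and miner). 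If the threshold $\hat b_i$ were at least $2$, consider a deviation: the bidder truthfully values $v_i$ slightly below $\hat b_i$, say $v_i = \hat b_i - 1 \ge 1$, so honestly it is not allocated and the pair gets $0$; but colluding, the miner can have the bidder submit $\fee_i = \hat b_i$, getting allocated, yielding joint utility $v_i - \burn_i(\hat b_i)$. For this not to be a profitable deviation we need $\burn_i(\hat b_i) \ge v_i = \hat b_i - 1$, i.e.\ the burn absorbs all but one unit of the threshold; hence the net revenue $\pay_i - \burn_i = \hat b_i - \burn_i(\hat b_i) \le 1$. Combined with the discreteness of bids (so "slightly below" means "exactly one less"), this gives the clean bound of $1$ per allocated bidder, and summing over the $k$ allocated bidders yields revenue $\le k$.

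The main obstacle I anticipate is justifying the reduction to a purely per-bidder analysis rigorously: separability of $\pay$ and $\burn$ is given, but the allocation rule $\alloc^\auction$ need not be separable a priori, so I must argue either that \gls{DSIC} forces $\alloc_i$ to depend only on $\fee_i$ (via monotonicity in $\fee_i$ for fixed $\mathbf{\fee_{-i}}$, plus a 1-\gls{SCP}/\gls{OCA} argument that the threshold cannot vary with $\mathbf{\fee_{-i}}$ in the infinite-blocksize regime), or else carry out the collusion argument in step (ii) at a fixed ambient profile $\mathbf{\fee_{-i}}$ and check the deviation stays within the claimed bound uniformly. A secondary subtlety is handling the \gls{EPBB} constraint $\burn_i(\fee_i) \le \pay_i(\fee_i)$ to ensure the net revenue term is nonnegative and the threshold-payment identity is consistent; this should follow directly once the payment is pinned down to the critical-bid form. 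I expect the rest — the Myerson characterization and the arithmetic with discrete thresholds — to be routine.
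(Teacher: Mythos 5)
Your proposal is correct and follows essentially the same route as the paper: fix the ambient profile $\mathbf{\fee_{-i}}$, use the DSIC critical-bid/threshold-payment characterization, and then apply $1$-\gls{SCP} to a bidder whose true value is exactly one (discrete) unit below its critical bid, concluding that the burn must absorb all but one unit of the payment so that each allocated bidder contributes at most $1$ to revenue. The paper likewise handles the non-separable allocation by working at a fixed profile and uses separability of $\pay$ and $\burn$ only to argue that the other bidders' revenue contributions are unaffected by the deviation, exactly as you anticipated.
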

To give an intuitive overview of our proof for \cref{thm:SeparableScpImpliesLowRev}, it proceeds by contradiction, and is separated into two cases.
In the first, we show that if a bidder can change its bid to increase its utility, one can contradict the \gls{DSIC} property.
In the second, we show that as a bidder cannot ``unilaterally'' profitably deviate, that any manipulation that increases its utility necessitates colluding with the miner, thus contradicting the \gls{SCP} property.

Although the \gls{GTA} and its finite-blocksize version have certain good properties, \cref{thm:ScpImpliesLowRev,thm:SeparableScpImpliesLowRev} show that their per-block revenue is low and linear in the number of allocated transactions.
In \cref{sec:PABGA}, we show that other solutions can obtain better revenue and still satisfy most of the \gls{TFM} desiderata.

\section{Optimal \gls{OCA}-proof \glspl{TFM} via \gls{BIC}}
\label{sec:PABGA}
We now examine the ``traditional'' setting, where bids are continuous, and blocks are finite $\blocksize \in \mathbb{N}$.
We relax the \gls{DSIC} and 1-\gls{SCP} properties, as prior art found that satisfying both in this setting leads to low revenue \cite{chung2023foundations}.

\subsection{The Shading Auction}
\label{sec:ShadingAuction}
The \gls{PABGA} satisfies \gls{MMIC} and is \gls{OCA}-proof, yet it is not \gls{DSIC} (see \cref{sec:Model}).
However, \gls{DSIC} is a strong requirement that fits a full information setting, while blockchain users may not be fully informed as to other users' transactions: although they may be able to observe some pending transactions, they cannot observe future transactions or those that were sent using private channels.
Instead, our setting can be modeled as a game of incomplete information, where it is reasonable to require a relaxed property called \gls{BIC}, which relies on a more natural equilibrium concept called \gls{BNE}.
\begin{definition}[\Glsxtrfull{BNE}]
    \label{def:BNE}
    An auction has a \glsxtrlong{BNE} if for any bidder $i$ there is a bidding strategy $s_i:V_i \rightarrow R$ such that for any $\fee_i$:
    $
        \expect_{\mathbf{v_{-i}}}{u_i(s_i(v_i), \mathbf{s_{-i}(v_{-i})} ; v_i)}
        \geq
        \expect_{\mathbf{v_{-i}}}{u_i(\fee_i, \mathbf{s_{-i}(v_{-i})} ; v_i)}
        .
    $
\end{definition}
\begin{definition}[\Glsxtrfull{BIC}]
    We say an auction is \glsxtrlong{BIC} if for any bidder $i$, bidding truthfully is a \gls{BNE}.
    Equivalently, for all $i$: $s_i(v_i) = v_i$.
\end{definition}

In particular, we are interested in the \emph{efficiency} admitted by a mechanism in equilibrium, where our \cref{def:Efficiency} follows Hartline \emph{et al.} \cite{chawla2013auctions}\iflong, and Dasgupta and Maskin \cite{dasgupta2000efficient}\fi.
\begin{definition}[Auction efficiency]
    \label{def:Efficiency}
    An equilibrium of an auction is called \emph{efficient} if it maximizes the sum of valuations of the allocated bidders.
\end{definition}

Given \cref{sec:Model}, namely, the assumption of valuations drawn \gls{iid} from a distribution with bounded support, Chawla and Hartline \cite{chawla2013auctions} show that the \gls{PABGA} is has a unique and efficient \gls{BNE}.
Thus, by Myerson's revelation principle (stated in \cref{fct:RevelationPrinciple}), the \gls{PABGA} has a \gls{BIC} representation, i.e., one can design a \gls{BIC} auction that always obtains the same results as the \gls{PABGA}.

\begin{lemma}[Revelation principle \cite{myerson1981optimal}]
    \label{fct:RevelationPrinciple}
    Any feasible mechanism $M$ with a \gls{BNE} has an equivalent \gls{BIC} mechanism which always produces the same outcomes as $M$.
    In this equivalent mechanism, bidders reveal their true valuations, and the auctioneer applies their equilibrium strategies on their behalf.
    The resulting mechanism is called the \emph{direct revelation} of $M$.
\end{lemma}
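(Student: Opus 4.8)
The plan is to carry out the classical Myersonian simulation argument. Let $M$ be a feasible mechanism with a \gls{BNE} $s = (s_1, \dots, s_n)$, where each $s_i : V_i \to \mathbb{R}$ maps a valuation to a bid, as in \cref{def:BNE}. I would define the \emph{direct revelation} mechanism $M'$ to solicit a reported valuation $\hat v_i$ from each bidder, assemble the bid vector $\bigl(s_1(\hat v_1), \dots, s_n(\hat v_n)\bigr)$, and output exactly the allocation, payments and burns that $M$ (i.e.\ its rules $\alloc, \pay, \burn$) assigns to that bid vector; in words, $M'$ runs $M$ on the bidders' behalf using their equilibrium strategies. It then suffices to check three things: that $M'$ is itself a feasible mechanism, that truthful reporting is a \gls{BNE} of $M'$, and that $M'$ always produces the same outcome as $M$ does under $s$.

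First I would dispatch feasibility: for any report profile $\hat v$ the induced bid vector $\bigl(s_1(\hat v_1), \dots, s_n(\hat v_n)\bigr)$ is a legitimate input to $M$, so the outcome $M'$ returns is exactly an outcome that $M$ is allowed to return; hence $M'$ inherits feasibility from $M$, and in fact the set of outcomes attainable by $M'$ is a subset of those attainable by $M$.

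Next I would verify that truthfulness is a \gls{BNE} of $M'$. Fix bidder $i$ with valuation $v_i$ and suppose every $j \neq i$ reports truthfully, hence submits the bid $s_j(v_j)$ inside the simulated copy of $M$. For any report $\hat v_i$, bidder $i$'s interim expected utility in $M'$ equals $\expect_{\mathbf{v_{-i}}}{u_i(s_i(\hat v_i), \mathbf{s_{-i}(v_{-i})} ; v_i)}$, i.e.\ it coincides with $i$'s payoff in $M$ from deviating to the bid $s_i(\hat v_i)$ against the equilibrium strategies of the others. Since $s$ is a \gls{BNE} of $M$, the bid $s_i(v_i)$ weakly dominates every bid $\fee_i$ in this interim sense, and in particular it weakly dominates $s_i(\hat v_i)$; thus reporting $v_i$ truthfully is a best response in $M'$. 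Finally, when all bidders report truthfully, $M'$ outputs $M$'s result on $\bigl(s_1(v_1), \dots, s_n(v_n)\bigr)$, which is precisely the outcome of $M$ played according to its \gls{BNE} $s$; this gives the claimed equivalence.

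I expect no deep obstacle here, since the construction is essentially mechanical; the one point that warrants care is that a bidder in $M'$ can steer the simulated run of $M$ only through bids lying in the image of its own strategy $s_i$, a subset of the bids available in $M$. This is exactly why the \gls{BNE} property of $s$ in $M$ --- which is quantified over \emph{all} bids $\fee_i$ --- transfers without loss to the property of truthful reporting being a best response over \emph{all} reports in $M'$. If the model allowed richer strategy spaces (e.g.\ randomized strategies, or valuation-vector-to-bid-vector maps instead of the coordinatewise $s_i$ of \cref{def:BNE}), I would first reduce to this coordinatewise pure-strategy form before running the argument above; in the present model no such reduction is needed.
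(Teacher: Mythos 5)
Your proposal is correct and is exactly the classical simulation argument that the paper relies on by citing Myerson \cite{myerson1981optimal} without reproducing a proof: define the direct-revelation mechanism that applies the \gls{BNE} strategies to reported valuations, observe that any deviation in reports corresponds to a bid in the image of $s_i$ and is therefore covered by the \gls{BNE} inequality of \cref{def:BNE}, and conclude outcome equivalence under truthful reporting. Your closing remark about deviations being restricted to the image of $s_i$ correctly identifies why the transfer is lossless, so no gap remains.
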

\begin{myremark}[Direct revelation for \glspl{TFM}]
    \label{rmk:DirectRevelation}
    One caveat to applying the revelation principle in our setting is that when we move the burden of executing the equilibrium strategies of the bidders to the miner, and given that we generally consider a strategic miner (as encompassed by the notion of \gls{MMIC}), a direct revelation mechanism could be subject to manipulation that is averted by the decentralized nature of a \gls{BNE}.
    In particular, for the first-price auction, where the equilibrium depends on the number of bidders $n$, and the revenue is monotonically increasing with $n$, the miner may introduce fake bidders and thus justify playing higher bids on behalf of the bidders.
    To ensure that the revelation principle applies even when accounting for miner malfeasance, one may consider using various tools such as \gls{MPC} \cite{chung2024collusionresilience}, where an auction's rules may be enforced by the mechanism, while so-called shared \glspl{mempool} can guarantee that all agents have the same view on the currently pending transactions \cite{danezis2022narwhal,gai2023scaling}.
    Importantly, the \gls{BNE} \emph{can} be played in a decentralized fashion, and so in this sense considering a \emph{BIC} mechanism is analogous to considering whether the bidders' \emph{BNE} is \gls{MMIC} and \gls{OCA}.
    In other words, given that bidders arrive at a certain \emph{BNE}, the miner cannot ex-post manipulate the allocation to increase its revenue, and the miner and bidders cannot coordinate on an alternative strategy that increases their joint utility.
    The \emph{BIC} notion can thus be considered more as a convenient technical tool to simplify the discussion.
\end{myremark}

When applying the revelation principle to the \gls{PABGA}, the resulting mechanism (given in \cref{def:ShadingAuction}) is called the \emph{shading auction}, due to bidders' \gls{BNE} strategies where they may \emph{shade} (i.e., lower) bids relative to their valuations.
\begin{definition}[Shading auction]
    \label{def:ShadingAuction}

    \noindent {\bf Intended allocation rule:} allocate the $\blocksize$ highest-bid transactions.

    \noindent {\bf Payment rule:} each accepted transaction pays its equilibrium bid under the \gls{PABGA}, i.e., if the \gls{BNE} bidding strategy is $s$, then bidder $i$ pays $s_i(v_i)$.

    \noindent {\bf Burning rule:} no fees are burnt.
\end{definition}
In \cref{lem:ShadingAuction}, we show that the shading auction obtains good properties.
\begin{restatable}{lemma}{lemShadingAuction}
    \label{lem:ShadingAuction}
    The shading auction is \gls{BIC}, \gls{MMIC}, \gls{OCA}-proof and efficient.
\end{restatable}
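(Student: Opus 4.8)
The plan is to verify the four properties in turn, using two features of the shading auction: it is, by construction, the direct revelation of the \gls{PABGA} (\cref{def:ShadingAuction}), and it burns nothing, so every collected payment ends up with the miner. For \gls{BIC} and efficiency I would appeal to what is already available: by \cite{chawla2013auctions} the \gls{PABGA} admits a unique \gls{BNE}, whose (symmetric) equilibrium strategy $s$ is monotone non-decreasing and satisfies $s(v)\le v$, and that equilibrium is efficient. The revelation principle (\cref{fct:RevelationPrinciple}) then says that truthful reporting is a \gls{BNE} of the shading auction, which is precisely \gls{BIC}; and efficiency is immediate, since under truthful reporting the auction allocates the $\blocksize$ highest valuations, which by \cref{def:Efficiency} maximizes the sum of valuations of the allocated bidders.

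For \gls{MMIC} I would argue directly. As no fees are burnt, the miner's utility equals the sum of the payments of the allocated real transactions, with fake transactions contributing nothing net (whatever a fake transaction ``pays'' is money the miner transfers to itself; a fake transaction can affect revenue only by occupying a block slot and displacing a real transaction). An allocated bidder who reported $v_i$ pays $s(v_i)\ge 0$, an amount that depends only on $v_i$ and not on which allocation the miner picks, and $s$ is monotone non-decreasing. Hence, whatever set of at most $\blocksize$ transactions the miner decides to include, its revenue is maximized by considering all real transactions, adding no fake ones, and allocating the $\blocksize$ highest of them --- which is exactly the intended rule --- so no deviation strictly increases its utility. (One must also note that the miner cannot alter the reports themselves, i.e.\ it cannot retroactively inflate the fixed payment function $s$ by adding fake bidders; see the final paragraph.)

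For \gls{OCA}-proofness I would bound the joint utility of an arbitrary coalition of the miner and a bidder set $C$ under an arbitrary joint deviation, with the non-colluding bidders reporting truthfully. Because nothing is burnt, this joint utility equals the total valuation of the allocated colluding bidders plus the payments $s(v_j)$ collected from the allocated non-colluding bidders $j$: inside the coalition the payment of a colluding winner transfers from bidder to miner and cancels, an unallocated bidder pays $0$ by \gls{EPIR}, and fake bids are once more net-neutral. Using $s(v_j)\le v_j$ and the fact that at most $\blocksize$ transactions can be allocated, this is at most the sum of the $\blocksize$ largest valuations --- which is exactly the joint utility obtained by the miner together with the winning bidders under the intended (honest) allocation, since there too the payments cancel within that coalition. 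Thus no coalition can exceed the intended winning coalition's joint utility.

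The step I expect to be the main obstacle is making the \gls{OCA}-proofness argument airtight against the caveat raised in \cref{rmk:DirectRevelation}: one must check that fake bids are genuinely neutral under the model's utility definitions, that the benchmark in \cref{def:Oca} is indeed the top-$\blocksize$ efficient welfare (so the bound above actually closes), and --- most importantly --- that the equilibrium strategy $s$ is a fixed object determined by the real bidders' decentralized play, so that the miner cannot manipulate payments by pretending there are more bidders than there are. A clean way to sidestep the subtlety is to observe that the shading auction coincides with the \gls{PABGA} under the order-preserving relabeling ``report $v$'' $\leftrightarrow$ ``bid $s(v)$'': under this relabeling payments, allocations, and all of the miner's and coalitions' deviation options in the shading auction form a subset of those in the \gls{PABGA} with the same honest outcome, so \gls{MMIC} and \gls{OCA}-proofness transfer directly from the known properties of Bitcoin's \gls{TFM} \cite{roughgarden2021transaction}, leaving only \gls{BIC} and efficiency to be checked as above.
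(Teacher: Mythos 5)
Your proposal is correct and follows essentially the same route as the paper: \gls{BIC} and efficiency via the unique efficient \gls{BNE} of Chawla and Hartline plus the revelation principle, \gls{MMIC} via the observation that payments are separable (each winner pays $s(v_i)$, depending only on its own report) and the intended allocation is revenue-maximizing, and \gls{OCA}-proofness via the fact that with no burning the honest outcome already maximizes the joint utility of the miner and the allocated bidders. The only difference is presentational --- you argue the latter two properties from first principles, whereas the paper invokes \cref{fct:SeparableMaximizingMmic} and \cref{fct:JointMaximizingOca} as black boxes.
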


\subsection{Revenue Guarantees}
We now prove a series of results showing that the \gls{PABGA} (and its direct revelation mechanism, the shading auction) has good revenue guarantees.
We start with \cref{thm:BKPABGA}, which provides guarantees for the case where user valuations are drawn from Myerson regular distributions \cite{myerson1981optimal}, as defined in \cref{def:MyersonRegular}.
\begin{definition}[Myerson regular distribution \cite{myerson1981optimal}]
    \label{def:MyersonRegular}
    Let $F,f$ be the \gls{CDF} and \gls{PDF} of a distribution, respectively.
    The distribution is said to be \emph{Myerson regular} if its virtual value function, defined as
    $
        w(v) \define v - \frac{1 - F(v)}{f(v)}
    $, is non-decreasing in $v$.
\end{definition}
\begin{restatable}{theorem}{BKPABGA}
    \label{thm:BKPABGA}
    With Myerson regular \gls{iid} distributions, the \gls{PABGA} with $n$ bidders and $\blocksize$ items has at least $\frac{n-\blocksize}{n}$ of the revenue of the optimal auction.
\end{restatable}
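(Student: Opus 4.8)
The plan is to leverage the classical Bulow–Klemperer argument in the multi-unit setting, combined with the efficiency of the PABGA's equilibrium established via \cref{lem:ShadingAuction}. Recall that the PABGA allocates the $\blocksize$ highest bids, and its unique \gls{BNE} is efficient, so in equilibrium it allocates the $\blocksize$ highest-\emph{valuation} bidders and its expected revenue equals the expected equilibrium payments. The key classical fact (Bulow–Klemperer, generalized to $\blocksize$ units) is that for a Myerson regular \gls{iid} distribution, the expected revenue of the efficient mechanism on $n$ bidders with $\blocksize$ units is at least the expected revenue of the optimal (Myerson) mechanism on $n-\blocksize$ bidders with $\blocksize$ units. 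Since adding bidders only helps the optimal mechanism, and the problem statement's benchmark is the optimal auction on $\blocksize$ identical items with (presumably) $n$ bidders, I will need to be careful about exactly which benchmark is meant; I will take it to be $\mathrm{OPT}_n$, the optimal \gls{BIC} auction for $\blocksize$ items and $n$ bidders.

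First I would set up Myerson's characterization: for regular \gls{iid} distributions, the revenue of any \gls{BIC} mechanism equals the expected virtual surplus $\ex\big[\sum_i w(v_i)\,x_i(\mathbf v)\big]$, and the optimal mechanism allocates the $\blocksize$ items to the (up to) $\blocksize$ bidders with the highest nonnegative virtual values. Because $w$ is nondecreasing (regularity), the efficient allocation — allocate to the top $\blocksize$ valuations — coincides with the virtual-surplus-maximizing allocation \emph{except} that the efficient rule also serves bidders with negative virtual value. Hence the PABGA's revenue is $\ex\big[\sum_{i \in \text{top-}\blocksize} w(v_i)\big]$, which is the full virtual surplus of the top $\blocksize$ order statistics, possibly including negative terms.

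Next, the heart of the argument: I would show $\mathrm{REV}(\text{PABGA}_n) \ge \tfrac{n-\blocksize}{n}\,\mathrm{OPT}_n$ by a symmetrization/averaging trick. Run the efficient (top-$\blocksize$) mechanism on $n$ bidders. Consider the $n$ ways of designating $\blocksize$ of the bidders as "sellers" who are excluded — or more precisely, use the standard recursive Bulow–Klemperer device: the optimal mechanism on $n-\blocksize$ bidders can be simulated inside the efficient mechanism on $n$ bidders by adding $\blocksize$ "phantom" bidders and running the efficient mechanism, whose revenue dominates. Then average over all $\binom{n}{n-\blocksize}$ choices of which $n-\blocksize$ of the actual bidders play the role of the "real" bidders: each bidder is "real" in a $\tfrac{n-\blocksize}{n}$ fraction of the terms, so by linearity of expectation and symmetry the efficient mechanism on $n$ bidders collects at least $\tfrac{n-\blocksize}{n}$ of $\mathrm{OPT}_n$. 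Combined with the fact that PABGA's equilibrium revenue equals that of the efficient mechanism (by \cref{lem:ShadingAuction}), this gives the claim.

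The main obstacle I anticipate is making the multi-unit Bulow–Klemperer reduction fully rigorous: in the single-item case the "add one bidder, efficiency beats optimal-on-$n$" step is clean, but with $\blocksize$ units one must verify that (i) the efficient mechanism on $n+\blocksize$ "agents" ($n$ real, $\blocksize$ phantom, or equivalently the comparison between $n$ and $n-\blocksize$ real bidders) has revenue at least that of $\mathrm{OPT}$ on $n-\blocksize$ bidders — this uses regularity so that the efficient allocation is a feasible (dominated) mechanism for the virtual-surplus objective — and (ii) the averaging over subsets of sellers is valid, i.e., that the revenue decomposes additively per bidder so symmetry applies. A secondary subtlety is the negative-virtual-value bidders that the efficient rule serves but the optimal rule does not; these only \emph{reduce} PABGA revenue relative to the virtual-surplus-of-top-$\blocksize$, so I must ensure the inequality is stated in the direction that survives this loss, or absorb it into the constant — checking that the bound $\tfrac{n-\blocksize}{n}$ is still achieved and not degraded is where the careful accounting lies.
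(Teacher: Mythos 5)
Your proposal follows essentially the same route as the paper: it decomposes the bound into (i) revenue equivalence between the \gls{PABGA} and the efficient ($\blocksize$-unit Vickrey) mechanism, (ii) a multi-unit Bulow--Klemperer step showing the efficient mechanism on $n$ bidders earns at least the optimal revenue on $n-\blocksize$ bidders, and (iii) a symmetry/virtual-surplus argument showing the optimal auction on $n-\blocksize$ bidders obtains a $\frac{n-\blocksize}{n}$ fraction of the optimal revenue on $n$ bidders. The only point where the paper is more concrete than you is step (ii), which you flag as the main obstacle and which the paper resolves by observing that the \gls{UPGA} is revenue-optimal among mechanisms that always sell all $\blocksize$ items, and that the optimal auction on $n-\blocksize$ bidders can be embedded in that class by giving unsold items away for free to the excluded bidders.
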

Our above benchmark is the revenue-optimal auction, but this does not necessarily guarantee how much of the value surplus (see \cref{def:ValueSurplus}) is handed to the miner.
I.e., with $\blocksize$ items and individually rational bidders, one can imagine that a miner somehow extracts all surplus to be its revenue.
Even so, for the uniform and exponential distributions, we can guarantee close approximations.
\begin{definition}[Value surplus]
    \label{def:ValueSurplus}
    The value surplus, given $n$ bidders with valuations $v_1, \ldots, v_n$ and $\blocksize$ items, is $V \define \sum_{i=1}^{\blocksize} v_i$.
\end{definition}

For the uniform distribution, we prove guarantees both in the sense of ratio of expectations and expectation of ratios (see the discussion of Ezra \emph{et al.} \cite{ezra2024prophet}).
Let $UNI([0,1])$ be the uniform distribution on the interval $[0,1]$.
\begin{restatable}{myclaim}{uniformDist}
    With uniform \gls{iid} distributions,
    $
        \frac{\expect_{v_j \sim UNI([0,1])}{Rev^{PABGA}}}{\expect_{v_j \sim UNI([0,1])}{V}}
        =
        \frac{n-\blocksize}{n+1-\frac{\blocksize + 1}{2}}
        .
    $
\end{restatable}

\begin{restatable}{myclaim}{UniformDistB}
    \label{clm:UniformDistB}
    With uniform \gls{iid} distributions and $\blocksize \leq 10$,
    $$
        \expect_{v_j \sim {UNI([0,1])}}{\frac{Rev^{PABGA}}{V}}
        \geq
        \frac{n-\blocksize}{n}
        .
    $$
\end{restatable}

\begin{myremark}
    We conjecture that \cref{clm:UniformDistB} generally holds for any $\blocksize$.
    However, proving it for an arbitrary block size requires proving a general non-holonomic identity, which is outside the scope of our work.
\end{myremark}

Let $EXP(\zeta)$ be the exponential distribution with parameter $\zeta$.
\begin{restatable}{myclaim}{ExponentialDist}
    \label{clm:ExponentialDist}
    With exponential \gls{iid} distributions, for some constant $c$, $\blocksize$ items and $n \in \Omega(\blocksize^c)$ bidders,
    $
        \lim_{\blocksize \rightarrow \infty} \frac{\expect_{v_j \sim EXP(\zeta)}{Rev^{PABGA}}}{\expect_{v_j \sim EXP(\zeta)}{V}} \geq \frac{c-1}{c}
        .
    $
\end{restatable}
Intuitively, the claim is interesting for $c > 1$, as $n \in \Omega(\blocksize^c)$ implies that there are more bidders than there are blocks.

\subsection{Revenue-optimality Among the Class of Well-reserved \glspl{TFM}}
\label{sec:WellReserved}

We complement the above results by proving an even stronger result, using a benchmark that is more suitable for \glspl{TFM} than for general auctions, as not every auction is implementable with good \gls{TFM} properties, as discussed in \cite{lavi2019redesigning,roughgarden2021transaction,gafni2024barriers}. We show that the \gls{PABGA} is revenue-optimal revenue among a rich class of mechanisms.
To support the analysis of this class, we introduce definitions that allow reasoning about a single round auction's revenue.
\begin{definition}[Auction payment]
    The total fees paid in a single round auction $\auction$ with payment rule $\mathbf{p}$, given $n$ bidders with valuations $v_1 \geq \ldots \geq v_n$, is $Pay^{\auction} \define \sum_{i=1}^{n} \pay_i(s(v_i))$, where $s$ is the \gls{BNE} strategy of a bidder under $\auction$.
\end{definition}
\begin{definition}[Auction revenue]
    The revenue of a single round auction $\auction$ with payment rule $\mathbf{p}$ and burn rule with payment rule $\mathbf{\burn}$, given $n$ bidders with valuations $v_1 \geq \ldots \geq v_n$, is $Rev^{\auction} \define \sum_{i=1}^{n} \left(\pay_i(s(v_i)) - \burn_i(s(v_i)\right)$, where $s$ is the \gls{BNE} strategy of a bidder under $\auction$.
\end{definition}
Thus, the revenue of the \gls{PABGA} equals $Rev^{PABGA} = \sum_{i=1}^{\blocksize} s(v_i)$, where $s$ is the \gls{BNE} strategy of bidders.
For the \gls{UPGA}, it is $Rev^{UPGA} = \blocksize \cdot v_{\blocksize + 1}$.

Finally, the class we consider is given in \cref{def:MechanismClass}.
Intuitively, the class includes all \gls{EPIR}, \gls{EPBB} mechanisms with a reserve price that ensure that any agent that does not bid higher than the reserve is not included in the block, and any agent that does is included unless the block is already full.
\begin{definition}
    [Class of Well-reserved \glspl{TFM} $\mathcal{M}$]
    \label{def:MechanismClass}
    $\mathcal{M}$ includes all \gls{EPIR}, \gls{EPBB} mechanisms with an allocation rule $\alloc$ such that for some reserve price $\reserve$, and for every bid $\fee_i$ and a vector of other bids $\mathbf{\fee_{-i}}$, if $\fee_i < \reserve$ then $\alloc_i(\fee_i, \mathbf{\fee_{-i}}) = 0$, and if $\fee_i \geq \reserve$, it cannot be that both $\alloc_i(\fee_i, \mathbf{\fee_{-i}}) = 0$ and $\sum_{i=1}^n \alloc_i(\fee_i, \mathbf{\fee_{-i}}) < \blocksize$.
\end{definition}

Prominent members of this class include the \gls{PABGA} with $\reserve = 0$, the \gls{GTA} with $\reserve = 1$, as well as \glsxtrshort{EIP}1559.
To our knowledge, these are the only mechanisms currently known to be \gls{MMIC} and \gls{OCA}-proof.
It also includes the Myerson auction \cite{myerson1981optimal}, a natural candidate for the optimal revenue mechanism.
However, \cref{clm:BurnReserve} implies that the Myerson auction is not \gls{OCA}-proof because while it sets some reserve price, it does not burn fees.

\begin{restatable}{lemma}{WellReservedOCA}
    \label{lem:wellReservedOCA}
    An \gls{OCA}-proof well-reserved \gls{TFM} allocates the top $\blocksize$ bids that are higher than the reserve-price $\reserve \geq 0$, and burns $\reserve$ for each allocated bid.
\end{restatable}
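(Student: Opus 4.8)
The plan is to apply the characterization of \cref{thm:OcaChar} to the given \gls{OCA}-proof mechanism, which already tells us that the total burn is governed by a monotonically non-decreasing size-based function $\burn^{\text{\tiny cardinal}}$ with $\burn^{\text{\tiny cardinal}}(0)=0$, and that, with bids sorted in decreasing order, the mechanism allocates the top $k^*$ bids for some $k^* \in \argmax_{0 \le k \le \blocksize'}\left(\sum_{i=1}^{k}\fee_i - \burn^{\text{\tiny cardinal}}(k)\right)$, where $\blocksize' \le \blocksize$. It then remains to (i) pin down the burn to be exactly $\burn^{\text{\tiny cardinal}}(k) = k\reserve$, (ii) show the inferred cap satisfies $\blocksize' = \blocksize$, and (iii) read off the allocation. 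For (ii), I would argue by contradiction: if $\blocksize' < \blocksize$, feed the mechanism $\blocksize'+1$ bids all strictly above $\reserve$; since $\blocksize'+1 \le \blocksize$, the well-reserved property (\cref{def:MechanismClass}) forces all of them to be allocated, contradicting the cap $k^* \le \blocksize'$ from \cref{thm:OcaChar}. Hence $\blocksize' = \blocksize$.

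The heart of the argument is (i), which I would establish by a two-sided bound on $\burn^{\text{\tiny cardinal}}(k)$. For the upper bound, take $k \le \blocksize$ bids equal to $\reserve$ and the remaining bids below $\reserve$; the well-reserved property allocates exactly these $k$, so combining \gls{EPBB} with \gls{EPIR} gives $\burn^{\text{\tiny cardinal}}(k) = \sum_i \burn_i \le \sum_i \pay_i \le \sum_i \fee_i = k\reserve$. The lower bound is the crux and exploits the tension between the reserve price and the \gls{OCA}-optimal allocation: suppose for contradiction that $\burn^{\text{\tiny cardinal}}(k) < k\reserve$ for some feasible $k$, and pick a value $v$ with $\burn^{\text{\tiny cardinal}}(k)/k < v < \reserve$. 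Presenting $k$ bids all equal to $v$ (rest zero), the \gls{OCA} objective at allocation level $k$ equals $kv - \burn^{\text{\tiny cardinal}}(k) > 0$, which strictly exceeds the value $0$ at level $0$; hence every maximizer has $k^* \ge 1$, so \cref{thm:OcaChar} forces the mechanism to allocate a bid of value $v < \reserve$. This contradicts the reserve-price property (\cref{def:ReservePrice}) embedded in \cref{def:MechanismClass}, and therefore $\burn^{\text{\tiny cardinal}}(k) \ge k\reserve$. Together the two bounds yield $\burn^{\text{\tiny cardinal}}(k) = k\reserve$, i.e., exactly $\reserve$ is burnt per allocated bid.

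Finally, substituting $\burn^{\text{\tiny cardinal}}(k) = k\reserve$ into the \gls{OCA} objective collapses it to $\sum_{i=1}^{k}(\fee_i - \reserve)$, whose maximizer over $0 \le k \le \blocksize$ includes every bid exceeding $\reserve$ (each contributing a strictly positive term) and excludes every bid below $\reserve$ (each contributing a strictly negative term), subject to the cap $\blocksize$. Thus the allocation is precisely the top $\blocksize$ bids above the reserve, which—together with the per-bid burn of $\reserve$—is the claimed form; the measure-zero boundary case $\fee_i = \reserve$ contributes $0$ to the objective and is resolved consistently by the well-reserved tie-breaking, which allocates such bids whenever the block is not yet full. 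I expect the lower bound in step (i) to be the main obstacle, since it is the only place where the reserve price and the \gls{OCA}-optimal allocation must be played against one another, and it hinges on the right sub-reserve bid construction rather than on a direct manipulation argument.
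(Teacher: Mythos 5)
Your proof is correct, but it takes a genuinely different route from the paper's. The paper argues entirely through the burn difference function $\burn^{\text{\tiny diff}}$ of \cref{def:BurnDiff}: it takes the first index $k$ at which $\burn^{\text{\tiny diff}}(k)$ deviates from $R=\burn^{\text{\tiny diff}}(1)$, splits into the cases $\burn^{\text{\tiny diff}}(k)<R$ and $\burn^{\text{\tiny diff}}(k)>R$, and for \emph{every} candidate reserve $\reserve'$ exhibits a bid vector on which the allocation rule of \cref{thm:OcaChar} either allocates a bid below $\reserve'$ or rejects a bid above $\reserve'$ while the block is not full; since no $\reserve'$ survives, well-reservedness forces $\burn^{\text{\tiny diff}}$ to be constant, and that constant is then read off a posteriori as the reserve that is burnt per allocated bid. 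You instead start from the reserve $\reserve$ declared in \cref{def:MechanismClass} and sandwich the cardinal burn directly: $\burn^{\text{\tiny cardinal}}(k)\le k\reserve$ via \gls{EPIR} and \gls{EPBB} on a profile of $k$ bids at exactly $\reserve$, and $\burn^{\text{\tiny cardinal}}(k)\ge k\reserve$ by showing a cheaper burn would make the $\argmax$ of \cref{thm:OcaChar} allocate a sub-reserve bid. Your version makes explicit two points the paper leaves implicit --- that the per-bid burn coincides with the class's own reserve price, and that the inferred cap $\blocksize'$ of \cref{thm:OcaChar} cannot be strictly below $\blocksize$ (via $\blocksize'+1$ above-reserve bids) --- while the paper's case analysis gets by without invoking \gls{EPIR}/\gls{EPBB} at all, using only allocation consistency. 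The only bookkeeping you should add is the degenerate case $\reserve=0$ (where your lower bound is vacuous and non-negativity of the burn plus your upper bound already give $\burn^{\text{\tiny cardinal}}(k)=0$) and a remark that the profiles you construct use only feasible allocation sizes $k\le\blocksize'$, which your step (ii) guarantees.
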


We next use revenue equivalence, which in our setting is actually a payment equivalence when accounting for burning.
This technical tool does not account for miner manipulations (e.g., introducing fake bids or not following the intended allocation), but rather takes the traditional auction theory perspective.
\begin{restatable}{myclaim}{clmPaymentEquivalence}
    \label{clm:PaymentEquivalence}
    Any two \gls{BIC} \glspl{TFM} that have the same allocation rule, have the same expected payments.
\end{restatable}
\begin{proof}
    This is obtained from the ``revenue equivalence'' theorem \cite{nisan2007algorithmic}, which gives conditions under which all mechanisms with the same allocation rule have the same user payments, in expectation.
    In the blockchain setting, this result can be cast as user payment equivalence, but not necessarily revenue equivalence for miners, due to potential burning.
    The technical conditions of the revenue equivalence theorem require that:
    (1) user types are path-connected; indeed, we assume they are symmetric, which satisfies this property.
    (2) There is a type that has the same payment in both mechanisms; since we discuss \gls{EPIR} mechanisms, the user with value $0$ for a transaction has payment $0$ across all mechanisms.
\end{proof}

We thus characterize the revenue of well-reserved \gls{OCA}-proof auctions:
\begin{restatable}{lemma}{WellReservedOCARevenue}
    \label{clm:BurnReserve}
    Any \gls{BIC}, \gls{OCA}-proof well-reserved \gls{TFM} has the same expected revenue as the \gls{UPGA} with a reserve of $\reserve$ for some $\reserve \geq 0$.
\end{restatable}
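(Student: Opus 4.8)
The plan is to combine Lemma~\ref{lem:wellReservedOCA}, which pins down the allocation and burn rule of any \gls{OCA}-proof well-reserved \gls{TFM}, with the payment-equivalence Claim~\ref{clm:PaymentEquivalence}. First I would apply Lemma~\ref{lem:wellReservedOCA}: any \gls{BIC}, \gls{OCA}-proof well-reserved \gls{TFM} $\auction$ has, for some reserve $\reserve \geq 0$, exactly the allocation rule ``allocate the top $\blocksize$ bids that exceed $\reserve$'' and burns exactly $\reserve$ for each allocated bid. Crucially, the \gls{UPGA} with reserve $\reserve$ (i.e., the \gls{UPGA} augmented so that only bids above $\reserve$ are allocated, with payment equal to the maximum of $\reserve$ and the $(\blocksize+1)$-th highest bid) has this very same allocation rule. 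Hence the two mechanisms agree on allocation.

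Next I would invoke Claim~\ref{clm:PaymentEquivalence}: since both $\auction$ and the reserve-$\reserve$ \gls{UPGA} are \gls{BIC} and share an allocation rule, their expected \emph{payments} coincide, $\expect{Pay^{\auction}} = \expect{Pay^{\text{UPGA}_\reserve}}$. To convert this into a statement about expected \emph{revenue}, I use the burn characterization from Lemma~\ref{lem:wellReservedOCA}: the total burnt amount under $\auction$ is $\reserve$ times the (random) number of allocated bids $k$, and the reserve-$\reserve$ \gls{UPGA} (viewed as a well-reserved \gls{OCA}-proof mechanism, which by Lemma~\ref{lem:wellReservedOCA} must also burn $\reserve$ per allocated bid) burns the same $\reserve \cdot k$. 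Since the number of allocated bids is itself a function of the allocation rule alone, the expected burnt amounts are equal. Subtracting equal expected burns from equal expected payments yields equal expected revenue, which is the claim.

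The one subtlety — and the step I expect to need the most care — is making sure the comparison object ``the \gls{UPGA} with a reserve of $\reserve$'' is interpreted consistently with how burning is treated for \glspl{TFM}. The plain \gls{UPGA} of \cref{def:Upga} burns nothing, so if one reads ``\gls{UPGA} with reserve $\reserve$'' naively it would have revenue equal to its full payment, whereas $\auction$ has revenue equal to payment minus $\reserve k$. The resolution is that the statement is about the \emph{revenue} of the reserve-$\reserve$ \gls{UPGA} \emph{when it is made \gls{OCA}-proof}, i.e., when it is placed in the well-reserved \gls{OCA}-proof class; Lemma~\ref{lem:wellReservedOCA} then forces it to burn $\reserve$ per allocated bid, exactly matching $\auction$. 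So the argument reduces to: identical allocation $\Rightarrow$ identical expected payments (Claim~\ref{clm:PaymentEquivalence}) and identical expected burn ($= \reserve \cdot \expect{k}$, by Lemma~\ref{lem:wellReservedOCA}) $\Rightarrow$ identical expected revenue. I would write the proof in exactly that order, flagging the burn-accounting convention explicitly so the reader does not conflate it with the vanilla \gls{UPGA}.
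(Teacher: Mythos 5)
Your proposal is correct and follows essentially the same route as the paper's proof: invoke \cref{lem:wellReservedOCA} to identify the allocation (and per-bid burn of $\reserve$) with that of the reserve-$\reserve$ \gls{UPGA}, apply \cref{clm:PaymentEquivalence} to equate expected payments, and then equate expected burns to conclude equal expected revenue. Your explicit flagging of the burn-accounting convention for the comparison \gls{UPGA} is a slightly more careful rendering of the paper's remark that ``we define the burn rules to behave exactly the same,'' but it is the same argument.
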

\begin{proof}
    This follows from the fact that the allocation rule of \cref{lem:wellReservedOCA} is exactly that of the \gls{UPGA} with a reserve of $\reserve$ for some $\reserve\geq 0$.
    Thus, the payments of any \gls{BIC} mechanism must be the same as \gls{UPGA}$_{\reserve}$. Moreover, as we define the burn rules to behave exactly the same, the expected burn is the same as well.
    Since both payments and burn behave the same, so does revenue.
\end{proof}

We conclude with \cref{thm:RevenueOptimalClass}.
\begin{restatable}{theorem}{thmRevenueOptimalClass}
    \label{thm:RevenueOptimalClass}
    The \gls{PABGA} is the revenue-optimal well-reserved \gls{TFM}.
\end{restatable}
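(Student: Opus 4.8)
The plan is to reduce the optimization over the whole class $\mathcal{M}$ to an optimization over a single real parameter, the reserve price $\reserve \geq 0$, and then show that $\reserve = 0$ (the \gls{PABGA}) is optimal. First I would invoke \cref{lem:wellReservedOCA}: any \gls{OCA}-proof well-reserved \gls{TFM} allocates the top $\blocksize$ bids above $\reserve$ and burns $\reserve$ per allocated bid, so up to the choice of $\reserve$ its allocation rule is pinned down. Combined with \cref{clm:BurnReserve}, every \gls{BIC}, \gls{MMIC}, \gls{OCA}-proof member of $\mathcal{M}$ has the same expected revenue as $\mathrm{UPGA}_{\reserve}$ — namely the expected value of $\sum$ over allocated bidders of $v_{\blocksize+1}$ (or $\reserve$, whichever binds) minus the burnt $\reserve$ per allocated bid — for some $\reserve\geq 0$. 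So it suffices to show that this expected revenue, viewed as a function $R(\reserve)$, is maximized at $\reserve = 0$, and that $R(0)$ is exactly $Rev^{PABGA} = \expect{\sum_{i=1}^{\blocksize} s(v_i)}$.

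The key computation is then to compare $R(0)$ with $R(\reserve)$ for $\reserve > 0$. For $R(0)$, the mechanism is $\mathrm{UPGA}_0$, whose allocation (top $\blocksize$ bids, always filling the block) coincides with the \gls{PABGA}'s allocation; by \cref{clm:PaymentEquivalence} (payment equivalence) the two \gls{BIC} mechanisms with the same allocation rule have the same expected payments, and since neither burns anything, the same expected revenue — this identifies $R(0) = Rev^{PABGA}$. For $\reserve > 0$, I would argue directly that raising the reserve only hurts: on the event that at least $\blocksize+1$ bidders clear $\reserve$, the allocation and the clearing price $v_{\blocksize+1}$ are unchanged but the mechanism now burns $\blocksize\reserve$ that would otherwise have been revenue (or, equivalently, the payment-equivalent revenue of $\mathrm{UPGA}_{\reserve}$ restricted to this event is $\blocksize(\,v_{\blocksize+1}-\reserve\,)$ versus $\blocksize\, v_{\blocksize+1}$ and only the difference is lost — a clean sub-case), and on the event that fewer than $\blocksize+1$ clear $\reserve$ the reserve may raise the per-item price to $\reserve$ but strictly fewer items are sold, and one checks the net effect is still a loss relative to selling all $\blocksize$ at the \gls{PABGA} price. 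Summing the two events gives $R(\reserve) \le R(0)$ with the inequality strict whenever the distribution puts positive mass where the reserve binds.

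I expect the main obstacle to be the second event — the case where the reserve causes underselling. There the comparison is not a term-by-term domination: $\mathrm{UPGA}_{\reserve}$ charges $\reserve$ per sold item (no burning left over since the reserve equals the price), so one must show $\expect{\#\{i: v_i \ge \reserve\}\cdot \reserve \cdot \1{\#\{i:v_i\ge\reserve\} \le \blocksize}}$ plus the contribution from the full-block event is dominated by $Rev^{PABGA}$. The cleanest route is probably to avoid splitting into events at all and instead use the virtual-value / Myerson-style accounting used for \cref{thm:BKPABGA}: the expected revenue of any \gls{BIC} mechanism equals the expected virtual surplus of its allocation, and among allocations of the form "top $\blocksize$ above $\reserve$" the virtual surplus is non-decreasing as $\reserve$ decreases toward $0$ precisely when the bottom allocated virtual value stays nonnegative — but since \glspl{TFM} in this class may operate without regularity assumptions, I would instead lean on \cref{clm:PaymentEquivalence} relative to $\mathrm{UPGA}_{\reserve}$ and the elementary monotonicity of $\reserve \mapsto \expect{\mathrm{UPGA}_{\reserve}\text{ revenue}}$ at $\reserve=0$, which holds because the \gls{UPGA} revenue curve has nonpositive derivative at $\reserve = 0$ whenever the block is "typically" oversubscribed and, in the edge regime, one falls back to direct comparison. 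Finishing cleanly will require being careful that "revenue-optimal" in the theorem means optimal among \gls{BIC}, \gls{MMIC}, \gls{OCA}-proof members of $\mathcal{M}$ — \cref{lem:wellReservedOCA} and \cref{clm:BurnReserve} are exactly what license restricting attention to the one-parameter family, so the bulk of the write-up is the $R(\reserve)\le R(0)$ inequality.
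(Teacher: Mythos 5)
Your reduction to the one-parameter family $\mathrm{UPGA}_{\reserve}$ via \cref{lem:wellReservedOCA} and \cref{clm:BurnReserve}, and your identification of $R(0)$ with $Rev^{PABGA}$ via \cref{clm:PaymentEquivalence}, match the paper's proof exactly. The gap is in the comparison $R(\reserve)\le R(0)$, specifically in the undersubscribed event that you flag as ``the main obstacle.'' There is no obstacle there: by \cref{lem:wellReservedOCA}, \gls{OCA}-proofness forces the mechanism to burn $\reserve$ for \emph{each} allocated bid, and on the event $\bidsMoreReserve\le\blocksize$ the uniform price of $\mathrm{UPGA}_{\reserve}$ is exactly $\reserve$ (the $(\blocksize+1)$-th highest bid lies below the reserve), so every allocated bidder pays $\reserve$ and has $\reserve$ burnt --- the revenue contributed by this event is identically zero, not $\mathbb{E}\left[\#\{i:v_i\ge\reserve\}\cdot\reserve\cdot\1{\cdot}\right]$ as in your accounting. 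Once this is in hand the proof is three lines: $R(\reserve)$ equals the revenue on the oversubscribed event alone, which is the $\mathrm{UPGA}_0$ payment on that event minus a nonnegative burn, hence at most $\mathbb{E}\left[Pay^{UPGA_0}\right] = R(0)$. This is precisely the paper's argument.

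Because you miss this cancellation, you are driven to virtual-value accounting and a claimed monotonicity of $\reserve\mapsto R(\reserve)$ at $\reserve=0$, which (as you yourself note) requires regularity the theorem does not assume, and your fallback (``one checks the net effect is still a loss,'' ``falls back to direct comparison'') is not a proof. Note also an internal inconsistency in your write-up: you parenthetically observe that the reserve equals the price on the undersold event, yet still treat the payment $\reserve$ per sold item as surviving revenue that must be dominated. Carrying the forced burn through that observation closes the gap and removes the need for any of the additional machinery.
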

\begin{proof}
    By \cref{clm:PaymentEquivalence}, the \gls{PABGA} with no reserve price is revenue equivalent to the \gls{UPGA} with no reserve, as both have the same allocation and no fee burning:
    $
        \expect_{\mathbf{\fee}}{Rev^{PABGA_0}}
        =
        \expect_{\mathbf{\fee}}{Rev^{UPGA_0}}
        .
    $
    As no fees are burnt, we can write:
    $
        \expect_{\mathbf{\fee}}{Rev^{UPGA_0}}
        =
        \expect_{\mathbf{\fee}}{Pay^{UPGA_0}}
        .
    $
    By using conditional expectation:
    $
        \expect_{\mathbf{\fee}}{Pay^{UPGA_0}}
        \geq
        Pr_b[\bidsMoreReserve > \blocksize] \cdot \expect_{\mathbf{\fee}}{Pay^{UPGA_0} | \bidsMoreReserve > \blocksize}
        .
    $
    With an overfull block, the payments of $UPGA_0$ and $UPGA_\reserve$ are equal, thus:
    $
        Pr_b[\bidsMoreReserve > \blocksize] \expect_{\mathbf{\fee}}{Pay^{UPGA_0} | \bidsMoreReserve > \blocksize}
        =
        Pr_b[\bidsMoreReserve > \blocksize] \expect_{\mathbf{\fee}}{Pay^{UPGA_{\reserve}} | \bidsMoreReserve > \blocksize}
        .
    $
    As burning is non-negative, one can write:
    \begin{multline*}
        Pr_b[\bidsMoreReserve > \blocksize] \expect_{\mathbf{\fee}}{Pay^{UPGA_{\reserve}} | \bidsMoreReserve > \blocksize}
        \geq
        \\
        Pr_b[\bidsMoreReserve > \blocksize] \expect_{\mathbf{\fee}}{Pay^{UPGA_{\reserve}} - \hspace{-0.5cm}\sum_{\fee_i \in \alloc^\auction(\fee_i, \mathbf{\fee_{-i}})} \hspace{-0.5cm} \burn^\auction_i(\fee_i, \mathbf{\fee_{-i}}) | \bidsMoreReserve > \blocksize}
    \end{multline*}
    But, this is exactly $\expect_{\mathbf{\fee}}{Pay^{UPGA_r}}$ when $\reserve$ is burnt from each allocated bid, since by conditional probability:
    \begin{align*}
        \expect_{\mathbf{\fee}}{Pay^{UPGA_0}}
        & =
        Pr_b[\bidsMoreReserve > \blocksize] \expect_{\mathbf{\fee}}{Pay^{UPGA_{\reserve}} - \hspace{-0.5cm}\sum_{\fee_i \in \alloc^\auction(\fee_i, \mathbf{\fee_{-i}})} \hspace{-0.5cm} \burn^\auction_i(\fee_i, \mathbf{\fee_{-i}}) | \bidsMoreReserve > \blocksize}
        \\
        & +
        Pr_b[\bidsMoreReserve \leq \blocksize] \expect_{\mathbf{\fee}}{Pay^{UPGA_{\reserve}} - \hspace{-0.5cm}\sum_{\fee_i \in \alloc^\auction(\fee_i, \mathbf{\fee_{-i}})} \hspace{-0.5cm} \burn^\auction_i(\fee_i, \mathbf{\fee_{-i}}) | \bidsMoreReserve \leq \blocksize}
        ,
    \end{align*}
    and
    \begin{align*}
         &
         Pr_b[\bidsMoreReserve \leq \blocksize] \expect_{\mathbf{\fee}}{Pay^{UPGA_{\reserve}} - \hspace{-0.5cm}\sum_{\fee_i \in \alloc^\auction(\fee_i, \mathbf{\fee_{-i}})} \hspace{-0.5cm} \burn^\auction_i(\fee_i, \mathbf{\fee_{-i}}) | \bidsMoreReserve \leq \blocksize}
         \\&
         =
         Pr_b[\bidsMoreReserve \leq \blocksize] \expect_{\mathbf{\fee}}{\reserve - \reserve | \bidsMoreReserve \leq \blocksize}
         =
         0
         .
    \end{align*}
\end{proof}

Our characterization of \gls{OCA}-proof \glspl{TFM} (see \cref{thm:OcaChar}) equips us with the tools to understand which mechanism may outperform the \gls{PABGA}, once we step out of the class of well-reserved mechanisms.
In particular, the optimality of the \gls{PABGA} can be examined through the lens of the ``burn difference'' function $\burn^{\text{\tiny diff}}$, that, once chosen, determines the allocation, and thus, the payments.
\begin{definition}[Burn difference]
    \label{def:BurnDiff}
    Given the burn rule $\burn$ of an \gls{OCA}-proof \gls{TFM}, and the matching size-based burn function $\burn^{\text{\tiny cardinal}}$, then the \emph{burn difference} function is defined as:
    $
        \burn^{\text{\tiny diff}}(k)
        =
        \begin{cases}
            \burn^{\text{\tiny cardinal}}(1)                      & k = 1                    \\
            \burn^{\text{\tiny cardinal}}(k) - \burn^{\text{\tiny cardinal}}(k-1) & 2 \leq k \leq \blocksize
        \end{cases}
        .
    $
\end{definition}
If $\burn^{\text{\tiny diff}}$ is constant, we know that the \gls{PABGA} is optimal.
However, as we show in \cref{ex:NonConstantDelta}, this may not hold in other cases.
Intuitively, an increasing burn difference function allows us to limit the supply, and therefore increase the revenue relative to the \gls{PABGA}, implying that the latter is not optimal.
\begin{example}
    \label{ex:NonConstantDelta}
    Consider $4$ bidders, where each bidder $i$ is distributed \gls{iid} $v_i \sim UNI([0,1])$.
    Given these bidders, the expected revenue of the \gls{PABGA} on $3$ items is $\frac{3}{5}$, while the expected revenue of the \gls{PABGA} on $2$ items is $\frac{4}{5}$ (this follows from our calculations in \cref{clm:UniformDistB}).
    We conclude that the mechanism that first limits supply and then runs the \gls{PABGA} outperforms the ``vanilla'' \gls{PABGA}.
    In terms of $\burn^{\text{\tiny diff}}$, this means that instead of having $\burn^{\text{\tiny diff}}(1) = \burn^{\text{\tiny diff}}(2) = \burn^{\text{\tiny diff}}(3) = 0$, which yields \gls{PABGA}, we have $\burn^{\text{\tiny diff}}(1) = \burn^{\text{\tiny diff}}(2) = 0, \burn^{\text{\tiny diff}}(3) = \infty$.
\end{example}

\section{Discussion}
\label{sec:Discussion}
We highlight directions that may cultivate interesting future research on \glspl{TFM}:

\paragraphNoSkip{Complex valuations of transactions}
Previous research on \gls{TFM} focuses on users with unit or additive demand, in the sense that they either have a single transaction to be processed, or they have several that are independent of each other \cite{lavi2019redesigning,roughgarden2021transaction,chung2023foundations,shi2023what,chung2024collusionresilience,gafni2024barriers}.
Realistically, a user's utility from a given transaction may depend on other transactions \cite{ferreira2023credible,yaish2024speculative}.
Nourbakhsh \emph{et al.} \cite{nourbakhsh2024transaction} advance a formalization of this setting and show that \glsxtrshort{EIP}1559 is not \gls{IC} when considering order-sensitive transactions.
Designing mechanisms that ensure good properties under such settings is an open question, more so if they are required to do so efficiently (preferably sub-linearly, perhaps using sketching algorithms).

\paragraphNoSkip{``Optimal'' \glspl{TFM}}
We take the common benchmark of considering auctions to be optimal when the auctioneer maximizes its revenue.
Although it can be argued that miners should be adequately compensated as they invest resources to secure the system\iflong \cite{yaish2023correct}\fi, one may consider other benchmarks for the optimality of \glspl{TFM}.
For example, Basu \emph{et al.} \cite{basu2023stablefees} consider it a feature of their design that users are expected to pay less.
It could be interesting to consider revenue-\emph{minimizing} auctions that maintain high user utility, or study an optimization problem over user utility, subject to maximal revenue constraints.

\paragraphNoSkip{Discrete \glspl{TFM}}
While our impossibility result for discrete-bid \glspl{TFM} shows that their revenue is limited by the number of transactions included in a block, our analysis focuses on the case where each transaction pays $1$, thus leading to low revenue.
A promising direction for research is to consider \glspl{TFM} that set the discrete ``steps'' for bids in a manner that maximizes revenue, similarly to the literature on discrete-step English auctions \cite{david2007optimal}.

\paragraphNoSkip{Plain model \glspl{TFM}}
A line of work shows that various auction designs can prevent misbehavior by actors, for example, by relying on dynamic rules and fee burning (see \cref{rmk:DynamicTFMs}), or on cryptographic primitives that can be augmented with randomized allocation rules \cite{shi2023what}.
However, such designs introduce complexity that may result in unintended side effects.
In the case of dynamic rules, common designs are prone to ``chaotic'' behavior \cite{leonardos2021dynamical,reijsbergen2021transaction,leonardos2023optimality}, while obtaining randomness in the blockchain setting can be hard \cite{gafni2024barriers}.
In contrast, we find that the \gls{PABGA} is good (it is \gls{MMIC}, \gls{OCA}-proof and has good revenue guarantees) without employing cryptographic tools, thus fitting the so-called plain model considered by Shi \emph{et al.} \cite{shi2023what}.
Furthermore, it is not dynamic, and does not rely on fee burning or randomization.
These findings highlight the need to explore similarly ``simple'' mechanisms \iflong\cite{daskalakis2011simple,devanur2015simple}\else\cite{daskalakis2011simple}\fi, and the trade-offs that may exist between simplicity and the ability to satisfy the \gls{TFM} desiderata.

\paragraphNoSkip{Non-constant $\burn^{\text{\tiny diff}}$ \glspl{TFM}}
In our characterization of \gls{OCA}-proof mechanisms in \cref{thm:OcaChar}, and later in the characterization of the intersection of it with the class of well-reserved \glspl{TFM} in \cref{lem:wellReservedOCA}, we effectively limit ourselves to examining mechanisms with a constant size-based burn difference function $\burn^{\text{\tiny diff}}$.
As we show in \cref{ex:NonConstantDelta}, this is significant.
Non-constant $\burn^{\text{\tiny diff}}$ mechanisms generalize the idea of limiting supply (previously studied by Roughgarden \emph{et al.} \cite{roughgarden2012supply}), and are a promising design for high-revenue \gls{OCA}-proof mechanisms.

\begin{credits}
    \bigskip\noindent\textbf{\ackname}
    The authors' work on a related report \cite{gafni2022greedy} that includes early versions of some of the results contained in this work was supported by the European Research Council under the European Union’s Horizon 2020 research \& innovation programme (Grant 740435), and the Ministry of Science \& Technology, Israel.
    The work was supported by ``The Latest in De-Fi Research'' grant by the Uniswap Foundation.
\end{credits}

\bibliographystyle{splncs04}
\bibliography{main}

\appendix

\newpage

\section{Proofs}
\label[appendix]{sec:Proofs}

\thmGta*
\begin{proof}
    We tackle each desired \gls{TFM} property separately.

    \emph{(\gls{GTA} is \gls{DSIC}).}
    First, examine a user with value $v_i < 1$.
    If the user bids $\fee_i \geq 1$, its transaction will be allocated, resulting in a negative utility.
    Thus, a rational user will prefer to bid $0$.
    On the other hand, if $v_i \geq 1$, then by bidding less than $1$ the user obtains a utility of $0$, as its transaction will not be allocated.
    But, if the bidder were to bid $\fee_i \geq 1$, then for all $\fee_{-i}$ we get:
    $v_i - 1 = u_i(v_i, \mathbf{\fee_{-i}}) \geq u_i(\fee_i, \mathbf{\fee_{-i}}) = v_i - 1$, thereby satisfying \cref{def:Dsic}.

    \emph{(\gls{GTA} is \gls{MMIC}).}
    \gls{GTA}'s payment rule is separable, thus a given bid cannot affect the payment of another.
    Furthermore, as by assumption block space is infinite, and as by design \gls{GTA} does not burn any fees, it is revenue-maximizing for the miner to include all transactions that bid at least $1$, per the intended allocation rule.
    By \cref{fct:SeparableMaximizingMmic}, we conclude that \gls{GTA} is \gls{MMIC}.

    \emph{(\gls{GTA} is \gls{SCP}).}
    As fees are not burnt, the miner's revenue equals the sum of user payments, implying that the joint utility of the miner and any subset of bidders equals the sum of this subset's allocated bidders' valuations.
    Finally, the joint utility of any coalition is maximized when all agents are truthful, because then we get that any user with a valuation that is strictly positive is allocated.

    \emph{(\gls{GTA} is \gls{OCA}-proof).}
    Given that \gls{GTA} is \gls{SCP}, one can reach this result by making use of \cref{fct:ScpImpliesOca} (proven as Claim~\href{https://arxiv.org/pdf/2402.08564v1\#claim.3.3}{3.3} by  Gafni and Yaish \cite{gafni2024barriers}).
    \begin{myclaim}[\gls{SCP} $\Rightarrow$ \gls{OCA}-proofness \cite{gafni2024barriers}]
        \label{fct:ScpImpliesOca}
    \end{myclaim}
    Note that Gafni and Yaish \cite{gafni2024barriers} consider a version of the \gls{OCA}-proofness property that is quantified with a parameter $c$ (similarly to $c$-\gls{SCP}), yet their result holds under our definition of the property.
    To ensure the rigor of our analysis, we provide a proof for our non-parameterized definition of \gls{OCA}-proofness.
    \begin{proof}
        Assume towards contradiction that auction $\auction$ is \gls{SCP}, but not \gls{OCA}-proof.
        Thus, there is some $c^*$ for which a coalition of the miner and $c^*$ users can deviate and increase their joint utility to be higher than that of the intended allocation's winning coalition.
        As the auction is \gls{SCP}, by definition it is $c$-\gls{SCP} for any $c$, including $c^*$, so no coalition of the miner and $c^*$ users can increase their joint utility by deviating from the honest protocol, reaching a contradiction.
    \end{proof}
    For completeness, we prove this result via another argument.
    In our proof that \gls{GTA} is \gls{SCP}, we show that honesty maximizes the joint utility of the miner and the allocated bidders, meaning that \cref{def:JointMaximizing} is satisfied, thus \cref{fct:JointMaximizingOca} (proven as Prop.~\href{https://arxiv.org/pdf/2106.01340v3\#theorem.5.11}{5.11} by Roughgarden \cite{roughgarden2021transaction}) implies that \gls{GTA} is \gls{OCA}-proof.
    \begin{restatable}[Joint utility maximization]{definition}{defJointMaximizing}
        \label{def:JointMaximizing}
        An allocation strategy $\alloc$ and bidding strategy $s$ are called \emph{joint utility maximizing} if they maximize the total utility of the miner and the creators of all allocated transactions under $\alloc$ over any possible outcome for any valuation vector $\mathbf{v}$.
        Thus, if given $\mathbf{v}$ the corresponding bids (including ``fake'' miner-created bids) are $\mathbf{b} = s(\mathbf{v})$, and the set of all ``real'' bids allocated under $\alloc$ is $C_{\alloc} = \{ i \, \lvert \, \alloc_i(\fee_i,\mathbf{\fee_{-i}}) \define 1 \wedge i \in 1, \dots, n \}$, then for all valuations $\mathbf{v}$, bids $\mathbf{b}'$ and coalitions $C$:
        $
            u_{joint}(C_{\alloc}, s(\mathbf{v}); \mathbf{v})
            \geq
            u_{joint}(C, \mathbf{b}'; \mathbf{v})
            .
        $
    \end{restatable}
    \begin{proposition}[Joint utility maximizing $\Leftrightarrow$ \gls{OCA}-proofness \cite{roughgarden2021transaction}]
        \label{fct:JointMaximizingOca}
        A \gls{TFM} with intended allocation rule $\alloc^\auction$, payment rule $\mathbf{\pay^\auction}$, and burning rule $\mathbf{\burn^\auction}$, is \gls{OCA}-proof \gls{iff} there is an \gls{EPIR} bidding strategy that maximizes the joint utility of the miner and the allocated users under $\alloc^\auction$ over any possible outcome for any valuation vector $v$.
    \end{proposition}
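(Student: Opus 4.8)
The plan is to prove the two implications separately, using the single unifying observation that an off-chain agreement between the miner and a coalition $C$ of bidders is nothing more than a joint choice, by $C$, of a bid vector together with an allocation (plus internal side payments, which are zero-sum within the coalition and therefore do not affect $u_{joint}$); hence the best value obtainable from such an agreement at a valuation profile $\mathbf{v}$ is exactly $\max_{C,\mathbf{b}'} u_{joint}(C, \mathbf{b}'; \mathbf{v})$, ranging over all coalitions, bid vectors, and allocations. With this reading, \cref{def:Oca} asserts that this maximum never strictly exceeds $u_{joint}(C_{\alloc^\auction}, s(\mathbf{v}); \mathbf{v})$, the joint utility of the winning coalition produced by the intended allocation $\alloc^\auction$ under the prescribed honest bidding strategy $s$ --- which is precisely the inequality defining joint utility maximization in \cref{def:JointMaximizing}.

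For the ($\Leftarrow$) direction, suppose $s$ is an \gls{EPIR} bidding strategy such that $(\alloc^\auction, s)$ is joint utility maximizing. The \gls{EPIR} hypothesis guarantees that honest bidders are willing to follow $s$, so the benchmark coalition $C_{\alloc^\auction}$ is the one actually realized under honest play, with joint utility $u_{joint}(C_{\alloc^\auction}, s(\mathbf{v}); \mathbf{v})$. By \cref{def:JointMaximizing} this quantity dominates $u_{joint}(C, \mathbf{b}'; \mathbf{v})$ for every coalition $C$, every bid vector $\mathbf{b}'$, and every allocation, hence it dominates the value of every off-chain agreement; since this holds for coalitions of all sizes, the \gls{TFM} is $c$-\gls{OCA}-proof for all $c$, i.e., \gls{OCA}-proof.

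For the ($\Rightarrow$) direction, suppose the \gls{TFM} is \gls{OCA}-proof and take $s$ to be the honest bidding strategy; it is \gls{EPIR}, since a bidder following $s$ can always guarantee itself nonnegative utility (it would otherwise simply decline to participate). \gls{OCA}-proofness states that no coalition of the miner and bidders can, by jointly selecting bids $\mathbf{b}'$ and an allocation, obtain joint utility strictly above $u_{joint}(C_{\alloc^\auction}, s(\mathbf{v}); \mathbf{v})$; combining this across all coalitions and bid vectors, and using that side payments leave $u_{joint}$ unchanged, yields $u_{joint}(C_{\alloc^\auction}, s(\mathbf{v}); \mathbf{v}) \geq u_{joint}(C, \mathbf{b}'; \mathbf{v})$ for all $C$, $\mathbf{b}'$, $\mathbf{v}$, which is exactly the statement that $(\alloc^\auction, s)$ is joint utility maximizing.

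I expect the main obstacle to be the bookkeeping that makes the two formalisms line up exactly: \cref{def:JointMaximizing} quantifies over arbitrary full bid vectors $\mathbf{b}'$, whereas in an off-chain agreement the bidders outside the coalition are pinned to $s$. Bridging this gap requires arguing that, without loss of generality, the coalition may absorb every bidder whose bid deviates from $s$ --- and that enlarging the coalition in this way can only help, once each absorbed bidder is given the fallback of bidding so as to obtain zero utility --- so that ``best OCA value'' and ``$\max_{C,\mathbf{b}'} u_{joint}$'' indeed coincide. Once that equivalence is pinned down, both implications follow directly from \cref{def:Oca} and \cref{def:JointMaximizing}.
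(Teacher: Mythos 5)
The paper does not actually prove this proposition: it is imported verbatim as Prop.~5.11 of Roughgarden \cite{roughgarden2021transaction}, so there is no in-paper argument to compare yours against. Judged on its own terms, your proof of the direction the paper actually uses (joint utility maximization $\Rightarrow$ \gls{OCA}-proofness) is sound and matches the standard unfolding of definitions: any off-chain agreement realizes some pair of a coalition $C$ and a bid vector $\mathbf{b}'$, internal side payments cancel inside $u_{joint}$, and \cref{def:JointMaximizing} quantifies over all such pairs, so the benchmark dominates every agreement for every coalition size.

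The converse direction is where your self-identified ``bookkeeping'' obstacle is a genuine gap, and your proposed bridge does not close it. \cref{def:JointMaximizing} requires domination over \emph{all} pairs $(C,\mathbf{b}')$, including pairs in which bidders outside $C$ deviate from $s$, whereas a legitimate agreement under \cref{def:Oca} only lets coalition members deviate. Your fix is to absorb every deviator $j$ into the coalition, claiming this ``can only help'' once $j$ is given a zero-utility fallback. But absorbing $j$ adds the term $u_j(\fee'_j,\mathbf{\fee'_{-j}};v_j)$ to the joint utility, which is negative whenever $j$ overbids, is allocated, and is charged more than $v_j$; and switching $j$ to a zero-utility bid instead changes $\mathbf{\fee_{-i}}$ for every other coalition member $i$, which for non-separable payment or burning rules (\cref{def:SeparableRules}) can alter those members' allocations and payments and thus lower $u_{joint}(C,\cdot)$ below its original value. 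So neither branch of the fallback is shown to preserve the strict improvement over the benchmark, and the claimed coincidence of ``best OCA value'' with $\max_{C,\mathbf{b}'}u_{joint}(C,\mathbf{b}';\mathbf{v})$ remains unproven. A secondary loose end: you assert that the honest bidding strategy is individually rational ``since a bidder would otherwise decline to participate,'' but individual rationality of the \emph{strategy} is a property you must exhibit, not assume; the natural route is to invoke \cref{def:Epir} of the payment rule together with a concrete specification of $s$. For the equivalence as stated you should either restrict the quantification in \cref{def:JointMaximizing} to bid vectors agreeing with $s$ outside $C$, or supply the missing argument under additional structure (e.g., separability), or simply defer to Roughgarden's proof as the paper does.
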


    \emph{(\gls{GTA} obtains strictly positive miner revenue).}
    By definition, the block-space is infinite, and \gls{GTA}'s burning rule is the all $0$ function.
    Furthermore, by the previous results, users bid truthfully and there is no user-miner collusion.
    Thus, for each bidder who has a private value that is larger than $0$, then by the previous results it will submit a transaction with a non-zero fee, and a revenue-maximizing miner will include its transaction and collect a revenue of $1$.
\end{proof}

\SeparableMaximizingMmic*
\begin{proof}
    While our statement of the result differs from Roughgarden's \cite{roughgarden2021transaction} Theorem~\href{https://arxiv.org/pdf/2106.01340v3\#theorem.5.2}{5.2}, note that this is due to differences in how we define payment and burning rules.
    \cref{rmk:SeparableMaximizingMmic} bridges the gap between our works and shows that Roughgarden's proof holds for our statement, when considering our definitions.

    \begin{myremark}[Separable burning rules]
        \label{rmk:SeparableMaximizingMmic}
        Due to subtle modeling differences between \cite{roughgarden2021transaction} and this work, the proof of the latter does not immediately transfer.
        In particular, we consider burning rules that burn fees ``after'' they are paid (i.e., the burn is deducted from the payment), while Roughgarden considers rules that burn fees ``at the source'' (i.e., are deducted from bidders' utilities) and payments that are always transferred in full to the miner.
        This implies that the revenue under our setting (i.e., the payment minus the burn) is equivalent to the payment under Roughgarden's setting, allowing to re-use the same proof for our case.
        Thus, although the practical meaning of both results is identical, Roughgarden's statement is for \emph{any} burn rule, while our statement is only for \emph{separable} ones.
    \end{myremark}
\end{proof}

\thmScpImpliesLowRev*
\begin{proof}
    Let $\blocksize$ be the (potentially infinite) blocksize, $\burn^{\text{\tiny cardinal}}$ be the size-based burn, and $\burn^{\text{\tiny diff}}$ be the burn difference function (see \cref{def:BurnDiff}).

    First, assume towards contradiction that the block size is finite.
    Consider the set of $\blocksize + 1$ bids $\burn^{\text{\tiny cardinal}}(\blocksize) + 100, \ldots, \burn^{\text{\tiny cardinal}}(\blocksize) + 100, 0$.
    By \cref{thm:OcaChar}, all $\blocksize$ non-zero bids are allocated.
    Due to the critical bid payment property of \gls{DSIC} auctions \cite{myerson1981optimal}, each bidder needs to pay the critical bid to be allocated, keeping all other bids fixed.
    In the discrete case, this is either $\burn^{\text{\tiny diff}}(\blocksize)$ or $\burn^{\text{\tiny diff}}(\blocksize) + 1$ (depending on tie-breaking in the choice of argmax of the allocation rule of \cref{thm:OcaChar}).
    Since $\burn^{\text{\tiny cardinal}}$ is monotonically non-decreasing, $\burn^{\text{\tiny diff}}$ is non-negative, and
    $
        \burn^{\text{\tiny cardinal}}(\blocksize)
        =
        \sum_{k=1}^{\blocksize} \burn^{\text{\tiny diff}}(k)
        \geq
        \burn^{\text{\tiny diff}}(\blocksize)
        .
    $
    Thus, the miner and the $0$-bid bidder can collude and increase their joint utility in violation of $1$-\gls{SCP}:
    if the bidder bids above the critical bid and less than the rest (e.g., $\burn^{\text{\tiny diff}}(\blocksize) + 50$), the burn is unchanged (per \cref{thm:OcaChar}, the burn is size-based), but the winners' payments must increase.
    We conclude that the blocksize is infinite.

    With an infinite blocksize, assume toward contradiction that there is some $k$ where $\burn^{\text{\tiny diff}}(k) \geq \burn^{\text{\tiny diff}}(1) + 3$, and take the minimal such $k$.
    Consider the $k$ sets of bids
    $
        \forall j \in \left[k\right]:
        \,
        B_j
        \define
        \{ \burn^{\text{\tiny cardinal}}(k) + 100, \ldots, \burn^{\text{\tiny cardinal}}(k) + 100 \}
        \, s.t. \, |B_j| = j
        .
    $
    Similarly to before, the payment $p_1$ of the single bid in $B_1$ is either $\burn^{\text{\tiny diff}}(1)$ or $\burn^{\text{\tiny diff}}(1)+1$.
    The payment $p_k$ of one of the bids in $B_k$ satisfies in both tie-breaking cases: $p_k \geq \burn^{\text{\tiny diff}}(k) \geq \burn^{\text{\tiny diff}}(1) + 3 \geq p_1 + 2$.
    Thus, there must be some $j<k$ where the payment $p_{j+1}$ of a bidder in $B_{j+1}$ is larger than the payment $p_j$ of a bidder in $B_j$: $p_{j+1} > p_j$.
    Let $j^*$ be the \textit{last} such index, and consider the set of $j^* + 1$ bids
    $
        B_{j^*}'
        \define
        \{
        \burn^{\text{\tiny cardinal}}(k) + 100, \ldots, \burn^{\text{\tiny cardinal}}(k) + 100, p_{j^*+1} - 1
        \}
        .
    $
    Fixing the first $j^*$ bidders to $\burn^{\text{\tiny cardinal}}(k) + 100$, then to be allocated one needs to bid above the critical bid $p_{j^*+1}$, implying that the $p_{j^*+1} - 1$ bid is unallocated.
    We consider the two tie-breaking cases.
    If the payment of a winning bidder is $p_{j^*} + 1$, then given bids $B_{j^*}$ which are augmented with a $0$-bid bidder, there is a profitable collusion between the miner and the $0$-bid bidder where the bidder increase its bid to $p_{j^*+1} - 1$, as this increases winner payments and thus the coalitions' joint utility (i.e., the miner's revenue).
    Otherwise, if the payment of a winner under $B_{j^*}'$ is $p_{j^*}$, there is a profitable collusion of the miner and the $p^{j^*+1} - 1$-bid bidder:
    the latter increases its bid to $\burn^{\text{\tiny cardinal}}(k) + 100$, is allocated, the payments of the other $j^*$ bidders increase by at least $1$ to $p^{j^*+1}$, and the size-based burn increases by $\burn^{\text{\tiny diff}}(k+1)$.
    By our characterization (\cref{thm:OcaChar}), the critical bid would not be allocated if lower than the burn difference, thus: $\burn^{\text{\tiny diff}}(k+1) \leq p^{j^*+1}$.
    Overall, the change in the miner and bidder's joint utility is $j^* - 1$.
    Thus, as long as $j^* \geq 2$, this contradicts $1$-\gls{SCP}.
    If $j^* = 1$, then $\burn^{\text{\tiny diff}}(2) \geq \burn^{\text{\tiny diff}}(1) + 3$, and so $p_2 \geq p_1 + 2$, implying the miner can coordinate with a $0$-bid bidder given $B_1$ to raise its bid to $p_2 - 1$, which is still unallocated but increases the payment for the $\burn^{\text{\tiny cardinal}}(k) + 100 = \burn^{\text{\tiny cardinal}}(2) + 100$ bidder.

    Now, assume toward contradiction there is some $k$ with $\burn^{\text{\tiny diff}}(k) \leq \burn^{\text{\tiny diff}}(1) - 3$, and take the first such $k$.
    Consider the payment $p_k$ as previously defined.
    It must be that $p_k \leq \burn^{\text{\tiny diff}}(1) - 3 + 1 = \burn^{\text{\tiny diff}}(1) - 2$.
    Since $k$ is the first index with $\burn^{\text{\tiny diff}}(k) \leq \burn^{\text{\tiny diff}}(1) - 3$, the payment is (weakly) lower than all previous values of $\burn^{\text{\tiny diff}}$ (by assumption, $\forall i < k: \burn^{\text{\tiny diff}}(i) \geq \burn^{\text{\tiny diff}}(1) - 2$), and strictly lower than $\burn^{\text{\tiny diff}}(1)$.
    Therefore, the burn exceeds the payments, contradicting \gls{EPBB}:
    $
        k \cdot p_k
        \leq
        k (\burn^{\text{\tiny diff}}(1) - 2)
        <
        \sum_{i=1}^k \burn^{\text{\tiny diff}}(i)
        =
        \burn^{\text{\tiny cardinal}}(k)
        .
    $

    Overall, we conclude that for any $k$:
    $\burn^{\text{\tiny diff}}(1) - 2 \leq \burn^{\text{\tiny diff}}(k) \leq \burn^{\text{\tiny diff}}(1) + 2$.
    Thus, the payment by any allocated bid is at most $\burn^{\text{\tiny diff}}(1) + 3$.
    Finally, given $k$ allocated bidders, the miner's revenue is bounded by:
    \begin{align*}
        &
        \sum_{i=1}^k \pay_i(\fee_i, \mathbf{\fee_{-i}}) - \sum_{i=1}^k \burn_i(\fee_i, \mathbf{\fee_{-i}})
        \le
        \sum_{i=1}^k (\burn^{\text{\tiny diff}}(1) + 3) - \burn^{\text{\tiny cardinal}}(k)
        \\&
        =
        \sum_{i=1}^k \left(\burn^{\text{\tiny diff}}(1) + 3 - \burn^{\text{\tiny diff}}(i) \right)
        \leq
        \sum_{i=1}^k \left(\burn^{\text{\tiny diff}}(1) + 3 - (\burn^{\text{\tiny diff}}(1) - 2) \right)
        =
        5k 
        \in
        O(k)
    \end{align*}
\end{proof}

\thmSeparableScpImpliesLowRev*
\begin{proof}
Let $\auction$ be an \gls{EPIR} and \gls{EPBB} auction with separable burning and payment rules, specifically denote its payment rule by $\mathbf{p^\auction}$.
Choose an arbitrary order over the bidders, and let $\mathbf{\fee}$ be the maximal revenue sequence of bids such that the miner's revenue exceeds the number of allocated bids.
In case there are again multiple sequences satisfying this condition, pick the one with the minimal number of allocated bids.

Now, assume by contradiction that the statement of \cref{thm:SeparableScpImpliesLowRev} does not hold.
This implies that there is at least one agent $i$ with allocated bid $\fee_i$ that results in the miner earning more than $1$.
Let $\rho = \pay^\auction_i\left(\fee_i, \mathbf{\fee_{-i}}\right) - \burn^\auction_i\left(\fee_i, \mathbf{\fee_{-i}}\right) \geq 2 > 1$ be the miner's revenue from bidder $i$, and denote that bidder's payment by $\sigma = \pay^\auction_i\left(\fee_i, \mathbf{\fee_{-i}}\right)$. 
By \gls{EPIR}, we know that $2 \leq \rho \leq \sigma = \pay^\auction_i\left(\fee_i, \mathbf{\fee_{-i}}\right) \le \fee_i$.

Following the characterization of Myerson \cite{myerson1981optimal}, deterministic \gls{DSIC} auctions have monotone allocations and the critical bid property: a bidder is the critical bid so that it is allocated the item if and only if its bid is higher than this price.
Thus, we know that $\alloc_i(\fee_i, \mathbf{\fee_{-i}}) = 1[\fee_i \geq \sigma]$. 
Since $\sigma \rho \geq 2$, we can consider the case where $\fee'_i = \sigma - 1$.
Then, consider the collusion between the miner and bidder $i$ where the bidder bids $\fee_i$, and the miner pays $1$ to the bidder afterwards.
Thus, the bidder's bid is allocated and results in a revenue of $\rho \geq 2$ to the miner, allowing the bidder to be reimbursed.
The joint utility of the colluding miner and user increases from $0$ to $u_i(\fee_i, \mathbf{\fee_{-i}} ; \sigma - 1) + \rho = (\sigma - 1 - \sigma) + \rho = \rho - 1 \geq 1$. 

Recall that due to our assumption, specifically, that the payment and burning rules are separable, we have that the revenue from other bids is unchanged.
\end{proof}

\lemShadingAuction*
\begin{proof}
    We examine the different properties separately.

    \emph{The auction is \gls{BIC}.}
    Chawla and Hartline \cite{chawla2013auctions} analyze a class of auctions that includes the \gls{PABGA}, and show in their Theorem 4.6 (restated in \cref{thm:UniquenessOfBne}) that any auction from this class has a unique \gls{BNE} if bidders are \gls{iid} and their valuations are continuous and have bounded support.
    Our setting corresponds to the one of \cite{chawla2013auctions} (see \cref{sec:Model}), thus we conclude by applying \cref{fct:RevelationPrinciple}.
    \begin{theorem}[Uniqueness of \gls{BNE} \cite{chawla2013auctions}]
        \label{thm:UniquenessOfBne}
        There is only one symmetric Bayes-Nash Equilibrium of an n-agent auction with win-vs-tie-strict rank-based allocation rule with a (potentially 0) reserve price, bid-based payment rule, and \gls{iid} continuous distribution on values.
    \end{theorem}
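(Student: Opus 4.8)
The plan is to fix an arbitrary symmetric \gls{BNE} with common strategy $s : [\underline{v}, \overline{v}] \to \mathbb{R}_+$ and to show that $s$ is forced to equal one explicit function, determined entirely by the value distribution $F$, the rank-based allocation rule, and the reserve $\reserve$. Write $X(v)$ for the interim allocation probability of a bidder with value $v$ who plays $s$ against $n-1$ opponents also playing $s$, and let $P(v)$ denote their interim expected payment. The backbone of the argument is the observation that, once $s$ is known to be strictly increasing, the \emph{allocation} $X(v)$ no longer depends on $s$ at all: because the rule is rank-based and all $n$ bidders use the same increasing $s$, the rank of a bidder's bid coincides with the rank of its value, so $X(v)$ is the probability that value $v$ attains a winning rank among $n$ \gls{iid} draws from $F$ — a quantity fixed by $F$ and the rank rule alone. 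Combined with the bid-based (hence known) payment format and the envelope characterization of interim utility, this pins $s$ down uniquely.

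First I would establish monotonicity. Summing the two incentive constraints for values $v < v'$ (each type prefers its own bid to the other's) yields $\left(v' - v\right)\left(X(v') - X(v)\right) \geq 0$, and since the win-vs-tie-strict rank rule makes the interim allocation strictly increasing in one's own bid wherever opponents place mass, this forces $s(v') \geq s(v)$. The crucial and most delicate step is upgrading monotone to \emph{strictly} increasing and \emph{continuous}. Suppose the bid distribution had an atom at some $b_0$, i.e. a set of types of positive $F$-measure all bidding $b_0$. By the win-vs-tie-strict hypothesis, a bidder at $b_0$ gains a discrete jump in winning probability by moving to $b_0 + \varepsilon$, while the bid-based payment changes only continuously; for small $\varepsilon$ this is a strict improvement, contradicting equilibrium. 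Hence there are no atoms, so $s$ cannot be constant on any interval of values (the values are atomless), giving strict monotonicity. A downward jump of $s$ at some $v_0$ would leave a gap $\left(s(v_0^-), s(v_0^+)\right)$ in the support of bids; a type playing the top of the gap could lower its bid into the gap without changing its winning probability (opponents place no mass there) while strictly lowering its payment, again a profitable deviation. This rules out jumps and yields continuity.

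With $s$ continuous and strictly increasing on the participation range, the rank-equivalence observation applies, so $X(v)$ is the fixed, $s$-independent function described above. The reserve supplies the boundary condition: every type below the participation threshold $v^*$ stays out, where $v^* = \reserve$ when $\reserve > 0$ and $v^* = \underline{v}$ otherwise, and the marginal participating type $v^*$ has interim utility $0$ (it is indifferent to abstaining). The envelope theorem then gives $U(v) = \int_{v^*}^{v} X(t)\,\mathrm{d}t$ for participating types, whence $P(v) = v\,X(v) - \int_{v^*}^{v} X(t)\,\mathrm{d}t$. Equating this with the payment dictated by the bid-based format — for the pay-as-bid rule underlying the \gls{PABGA}, $P(v) = s(v)\,X(v)$ — and solving gives $s(v) = v - \frac{1}{X(v)}\int_{v^*}^{v} X(t)\,\mathrm{d}t$, a single explicit expression. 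Since every symmetric \gls{BNE} strategy must satisfy this identity, the equilibrium is unique.

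The main obstacle is the second step — establishing strict monotonicity and continuity — which is exactly where the win-vs-tie-strict and continuous-distribution hypotheses are indispensable. The atomlessness and no-gap arguments must be made rigorous as genuine profitable deviations rather than heuristic perturbations, and care is needed at the reserve to confirm that the marginal type's indifference pins its bid to $\reserve$ rather than leaving a free parameter; once these are secured, the rank-equivalence identity renders uniqueness immediate.
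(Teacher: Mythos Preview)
The paper does not prove this theorem; it merely restates Theorem~4.6 of Chawla and Hartline \cite{chawla2013auctions} and invokes it as a black box inside the proof of \cref{lem:ShadingAuction}. There is therefore no ``paper's own proof'' to compare against.

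That said, your sketch is a sound reconstruction of the standard uniqueness argument: establish monotonicity from incentive compatibility, upgrade to strict monotonicity and continuity using the win-vs-tie-strict hypothesis (no atoms, no gaps), observe that strict monotonicity makes the rank-based interim allocation $X(v)$ independent of $s$, and then recover $s$ from the envelope formula plus the payment format. One point to tighten: the theorem as stated covers arbitrary \emph{bid-based} payment rules, not just pay-as-bid, so the final identification step $P(v) = s(v)\,X(v)$ is specific to the \gls{PABGA} instance. For the general statement you would need to argue that the bid-based payment format, together with the pinned-down $P(v)$ and $X(v)$, determines $s(v)$ uniquely; this requires an additional monotonicity/invertibility property of the payment-as-a-function-of-bid map, which is part of what Chawla and Hartline's framework supplies. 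Your boundary condition $v^* = \reserve$ is likewise correct for pay-as-bid but need not hold verbatim for other bid-based rules.
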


    \emph{(The auction is \gls{MMIC}).}
    Under the \gls{BNE} considered when showing that the auction is \gls{BIC}, the auction is separable (\cref{def:SeparableRules}), as the payment of each winning bid $\fee_i$ does not depend on the other winning bids, but rather only on $\fee_i$ itself.
    The allocation prioritizes bids that result in higher payments, and thus is revenue maximizing, where revenue maximization is conditioned on the payment rule and does not mean that the auction is revenue optimal among all possible auction formats.
    By \cref{fct:SeparableMaximizingMmic}, we conclude that the auction is \gls{MMIC}.

    \emph{(The auction is efficient).}
    The unique \gls{BNE} shown by Chawla and Hartline \cite{chawla2013auctions} is \emph{efficient}, i.e., the bidders with the highest valuations are allocated, where this is implied by the ``rank-based'' allocation rule.

    \emph{(The auction is \gls{OCA}-proof).}
    Since the shading auction is efficient and does not burn fees, the joint utility of the miner and the allocated users is maximized and the conditions of \cref{fct:JointMaximizingOca} are satisfied, implying \gls{OCA}-proofness.
\end{proof}

\BKPABGA*
\begin{proof}
    The proof is a corollary of the following claims and the revenue equivalence between the \gls{PABGA} and the \gls{UPGA}:

    \begin{myclaim}
        The optimal revenue auction with $n - \blocksize$ bidders is a $\frac{n - \blocksize}{n}$ approximation to the optimal revenue auction with $n$ bidders.
    \end{myclaim}
    \begin{proof}
        By Myerson's theorem \cite{myerson1981optimal}, the revenue of the optimal auction $\auction_n$ with $n$ bidders is:
        $
            \argmax_{x} \expect_{v \sim F^n}{\sum_{i=1}^n \phi(v_i)x_i(v)}
            =
            n \argmax_{x_i} \expect_{v_i \sim F^n}{\phi(v_i)x_i(v)}
            ,
        $
        where $F,f$ are respectively the \gls{CDF} and \gls{PDF} of the distribution, and $\phi(v_i) = v_i - \frac{1 - F(v_i)}{f(v_i)}$.
        Since $x_i$ with $n-k$ bidders dominates $x_i$ with $n$ bidders (as there are less competing bidders for the same items),

        \begin{align*}
            Rev^{\auction_{n-\blocksize}}
            & \geq
            (n - \blocksize)  \argmax_{x_i} \expect_{v_i \sim F^{n-\blocksize}}{\phi(v_i)x_i(v)}
            \nonumber\\
            & \geq
            \frac{n - \blocksize}{n} \cdot n \argmax_{x_i} \expect_{v \sim F^{n-\blocksize}}{\phi(v_i)x_i(v)}
            \nonumber\\
            & \geq
            \frac{n - \blocksize}{n} \cdot n \argmax_{x_i} \expect_{v_i \sim F^{n}}{\phi(v_i)x_i(v)}
            \nonumber\\
            & =
            \frac{n - \blocksize}{n} Rev^{A_n}
        \end{align*}
    \end{proof}

    \begin{myclaim}[Bulow-Klemperer for $\blocksize$ identical items]
        The \gls{UPGA} with $n$ bidders has at least the revenue of the optimal auction with $n - \blocksize$ bidders.
    \end{myclaim}

    \begin{proof}
        The \gls{UPGA} with $n$ bidders is the optimal revenue auction that always allocates all $\blocksize$ items.
        Given the primitive of the optimal revenue auction $\auction$ with $n - \blocksize$ bidders, we can design an auction with the same revenue as $\auction$ that always allocates all $\blocksize$ items, by giving all the unallocated items for free to (perhaps some) of the remaining $\blocksize$ bidders.
        But, this auction then cannot supersede the revenue of the \gls{UPGA} as it is in the class for which the \gls{UPGA} is optimal.
    \end{proof}
\end{proof}

\uniformDist*

\begin{proof}
    By revenue equivalence and direct calculation:
    \begin{align}
        \expect_{v_j \sim UNI([0,1])}{Rev^{PABGA}}
         & =
        \expect_{v_j \sim UNI([0,1])}{Rev^{UPGA}}
        \nonumber \\
         & =
        \blocksize \cdot \expect_{v_j \sim UNI([0,1])}{\left( \{v_j\}_{j=1}^n \right)_{\left(n-(\blocksize+1)\right)}}
        \nonumber \\
         & =
        \blocksize \cdot \frac{n - \blocksize}{n + 1}
        \nonumber
    \end{align}
    The last transition is a calculation of the $\left(n-(\blocksize + 1)\right)$-th order statistic, i.e., the $\blocksize+1$ highest bid (in our case, the top rejected bid).
    Furthermore:
    \begin{align}
        \expect_{v_j \sim UNI([0,1])}{V}
         & =
        \sum_{i=1}^{\blocksize} \expect_{v_j \sim UNI([0,1])}{\left( \{v_j\}_{j=1}^n\right)_{(n-i)}}
        \nonumber \\
         & =
        \sum_{i=1}^{\blocksize} \frac{n + 1 - i}{n + 1}
        \nonumber \\
         & =
        \frac{\blocksize}{n+1}(n + 1 - \frac{\blocksize + 1}{2})
        \nonumber
    \end{align}
\end{proof}

\UniformDistB*

\begin{proof}
    It is enough to show that:
    $
        \forall v_j \in [0,1]: s(v_j) \geq \frac{n - \blocksize}{n} v_j
        .
    $
    Then:
    \begin{align*}
        \expect_{v_j \sim UNI([0,1])}{\frac{Rev^{PABGA}}{V}}
         & =
        \expect_{v_j \sim UNI([0,1])}{\frac{\sum_{i=1}^{\blocksize} s(v_j)}{\sum_{i=1}^{\blocksize}v_j}}
        \\
         & \geq \expect_{v_j \sim UNI([0,1])}{\frac{\sum_{i=1}^{\blocksize} \frac{n - \blocksize}{n} v_j}{\sum_{i=1}^{\blocksize}v_j}}
        \\
         & =
        \frac{n - \blocksize}{n} \expect_{v_j \sim UNI([0,1])}{\frac{\sum_{i=1}^{\blocksize} v_j}{\sum_{i=1}^{\blocksize} v_j}}
        \\
         & = \frac{n - \blocksize}{n}
    \end{align*}

    Thus, it is enough to prove the following claim:
    \begin{myclaim}
        With $n$ bidders \gls{iid} $UNI\left(\left[0,1\right]\right)$ and $\blocksize \leq 10$ items, there is a Bayesian Nash equilibrium where bidders with value $v$ bid $s(v)$, where
        \begin{align*}
            s(v)
             & =
            \frac{(n-k)v}{n} \cdot \frac{P_{n,k}(v)}{P_{n-1,k}(v)}
            \\
            P_{n,k}(v)
             & =
            \sum_{i=0}^{k-1} \binom{n}{n-i} \binom{n-i-1}{n-k} (-v)^{k-1-i}
        \end{align*}
        Moreover, $\frac{P_{n,k}(v)}{P_{n-1, k}(v)} \geq 1$ for any $0 \leq v \leq 1$ and $n \geq \blocksize$.
    \end{myclaim}

    \begin{proof}
        We generally follow the lines of the discussion of the derivation of the first-price auction equilibrium in \cite{easley2010networks}, but generalized to the \gls{PABGA}.
        We use $s(v)$ as specified in the statement as a guess to solve the differential equation:

        \begin{equation*}
            \frac{\partial \sum_{i=0}^{k-1} \binom{n-1}{i} v^{n-1-i}(1-v)^i \left( x - s(v) \right)}{\partial v} (x) = 0
        \end{equation*}

        This can then be directly verified using Mathematica up to $\blocksize = 10$. The fact that $P_{n,\blocksize}(v) \geq P_{n-1, \blocksize}(v)$ can be directly verified as well using multi-variable polynomial positivity methods (in particular, cylindrical algebraic decomposition \iflong\cite{caviness2012quantifier}\fi, previously used in mechanism design contexts by Gafni \emph{et al.} \cite{gafni2020vcg}).
        This makes use of the fact that $P_{n,\blocksize}(v)$ is polynomial not only in $v$ but in $n$ as well. This was  verified to hold up to $\blocksize = 65$.
    \end{proof}
\end{proof}

\ExponentialDist*

\begin{proof}
    Denote the harmonic sum by $H_n = \sum_{i=1}^n \frac{1}{i}$.
    By revenue equivalence and the formula for exponential \gls{iid} order statistics \cite{renyi1953theory}, we get:
    \begin{align*}
        \expect_{v_j \sim EXP(\zeta)}{Rev^{PABGA}}
        &
        =
        \expect_{v_j \sim EXP(\zeta)}{Rev^{UPGA}}
        \nonumber\\&
        =
        \blocksize \cdot \expect_{v_j \sim EXP(\zeta)}{\left(\{v_j\}_{j=1}^n\right)_{\left(n- (\blocksize + 1)\right)}}
        \nonumber\\&
        =
        \blocksize \frac{1}{\zeta} \cdot \sum_{j=\blocksize + 1}^{n} \frac{1}{j}
        =
        \blocksize \frac{1}{\zeta} \cdot (H_n - H_{\blocksize}),
    \end{align*}

    Furthermore:
    $
        \expect_{v_j \sim EXP(\zeta)}{V}
        =
        \sum_{i=1}^{\blocksize}  \expect_{v_j \sim EXP(\zeta)}{\left( \{v_j\}_{j=1}^n \right)_{\left(n-(i+1)\right)}}
        \leq
        \blocksize \frac{1}{\zeta} H_n
        .
    $
    %
    By writing $H_n = \ln n + O(1)$ \cite{boas1971partial}, and since for some $\nu$ and $\blocksize$ large enough we have $n \geq \nu \blocksize^c$, we can write for any $\blocksize$ large enough,
    \begin{align*}
        \frac{\expect_{v_j \sim EXP(\zeta)}{Rev^{PABGA}}}{\expect_{v_j \sim EXP(\zeta)}{V}}
        &
        \geq
        \frac{\blocksize \frac{1}{\zeta} \cdot (H_n - H_{\blocksize})}{\blocksize \frac{1}{\zeta} H_n}
        =
        \frac{H_n - H_{\blocksize}}{H_n}
        \nonumber\\&
        \geq
        \frac{H_{\nu \blocksize^c} - H_{\blocksize}}{H_{\nu \blocksize^c}}
        \geq
        \frac{\ln(\blocksize^c) - \ln(\blocksize) - O(1)}{\ln(\blocksize^c) + O(1)}
        \nonumber\\&
        =
        \frac{(c-1)\ln(\blocksize) - O(1)}{c \ln (\blocksize) + O(1)}
    \end{align*}
    Following from the above, we get:
    \begin{align*}
        lim_{\blocksize \rightarrow \infty} \frac{\expect_{v_j \sim EXP(\zeta)}{Rev^{PABGA}}}{\expect_{v_j \sim EXP(\zeta)}{V}}
        & \geq
        \lim_{\blocksize \rightarrow \infty} \frac{(c-1)\ln(\blocksize) - O(1)}{c \ln (\blocksize) + O(1)}
        =
        \frac{c-1}{c}
    \end{align*}
\end{proof}

\WellReservedOCA*
\begin{proof}
    Let $\blocksize$ be the block size, $\burn^{\text{\tiny cardinal}}$ be the size-based burn function, and $\burn^{\text{\tiny diff}}$ be the burn difference function.
    We now show that $\burn^{\text{\tiny diff}}$ must be a constant function if the mechanism is well-reserved.
    We proceed by splitting into cases according to whether the first divergent value of $\burn^{\text{\tiny diff}}$ is higher or lower than the preceding ones.

    Assume toward contradiction that $\burn^{\text{\tiny diff}}(1) = R$, and that there is some $k$ such that $\forall i < k, \burn^{\text{\tiny diff}}(i) = R$, and $\burn^{\text{\tiny diff}}(k) = \reserve < R$.
    For any $\reserve' > \reserve$, consider the vector of $k$ bids $\frac{\reserve' + \reserve}{2}, R, \ldots, R$.
    By the allocation rule of \cref{thm:OcaChar}, all bids are allocated.
    We conclude that $\reserve'$ cannot be the reserve price of well-reserved \glspl{TFM}, as it must satisfy that no bidder can be allocated when bidding less than $\reserve'$, but $\frac{\reserve'+\reserve}{2}$ is allocated.
    For any $\reserve' \leq \reserve$, consider the single bid $\frac{R + \reserve'}{2}$.
    By the allocation rule of \cref{thm:OcaChar}, the bid is not allocated, as it is less than $R$.
    We conclude that $\reserve'$ cannot be the reserve price of the well-reserved \glspl{TFM}, as it must satisfy that a bid cannot be unallocated if it is above $\reserve'$ and the number of allocated bids does not reach $\blocksize$.
    Since this covers all possible reserve prices for the current case, we conclude that the mechanism is not well-reserved.

    Now, consider the other case and assume, in contradiction, that $\burn^{\text{\tiny diff}}(1) = R$, and there is some $k$ so that $\forall i < k, \burn^{\text{\tiny diff}}(i) = R$, and $\burn^{\text{\tiny diff}}(k) = \reserve > R$.
    For any $\reserve' < \reserve$, consider the vector of $k$ bids $\frac{\reserve' + \reserve}{2}, \reserve, \ldots, \reserve$.
    By \cref{thm:OcaChar}, all bids are allocated except $\frac{\reserve' + \reserve}{2}$.
    Thus, $\reserve'$ cannot be the reserve price of well-reserved \glspl{TFM}, as it must satisfy that a bid cannot be unallocated if above $\reserve'$ and the number of allocated bids does not reach $\blocksize$.
    For $\reserve' \geq \reserve$, consider the bid $\frac{\reserve' + \reserve}{2}$.
    By \cref{thm:OcaChar}, the bid is allocated (since it is larger than $R$), but it should not (as it is less than $\reserve'$).
    This covers all options, thus the mechanism is not well-reserved.

    Finally, notice that given a constant $\burn^{\text{\tiny diff}}$, the allocation rule of \cref{thm:OcaChar} allocates the highest bidders as long as they are at least as high as $\burn^{\text{\tiny diff}}(1)$, while $\burn^{\text{\tiny diff}}(1)$ is burnt for each allocated bidder.
\end{proof}

\iflong
\newpage
\section{Glossary}
\label[appendix]{sec:Glossary}
We now present a summary of the symbols and acronyms used in this work.
\setglossarystyle{alttree}
\glssetwidest{AAAAAAA}
\printnoidxglossary[type=symbols]
\printnoidxglossary[type=\acronymtype]
\fi

\end{document}